\newtheorem{Claim}{Claim}
\def\NN{\mathbb{N}}
\def\cC{\cal{C}}
\def\bv{\bar{v}}
\def\bG{\bar{G}}
\title{Linear Size Universal Point Sets for Classes of Planar Graphs}
\author{Stefan {Felsner}}{Institute of Mathematics, Technische Universität Berlin, Germany \and\url{https://page.math.tu-berlin.de/~felsner/}}{felsner@math.tu-berlin.de}{https://orcid.org/0000-0002-6150-1998}{DFG Project FE~340/12-1.}
\author{Hendrik Schrezenmaier}{Institute of Mathematics, Technische Universität Berlin, Germany \and\url{https://page.math.tu-berlin.de/~schrezen/}}{schrezen@math.tu-berlin.de}{https://orcid.org/0000-0002-1671-9314}{DFG Project FE~340/12-1.}
\author{Felix Schr\"{o}der}{Institute of Mathematics, Technische Universität Berlin, Germany \and\url{https://page.math.tu-berlin.de/~fschroed/}}{fschroed@math.tu-berlin.de}{https://orcid.org/0000-0001-8563-3517}{}
\author{Raphael Steiner}{Institute of Theoretical Computer Science, Department of Computer Science, ETH Z\"{u}rich, Switzerland\and\url{https://sites.google.com/view/raphael-mario-steiner/about-me/}}{raphaelmario.steiner@inf.ethz.ch}{https://orcid.org/0000-0002-4234-6136}{DFG-GRK 2434 Facets of Complexity and an ETH Z\"{u}rich Postdoctoral Fellowship}
\authorrunning{S. Felsner, H. Schrezenmaier, F. Schr\"oder and R. Steiner}
\date{\today}
\begin{document}
\color{black}
\maketitle

\begin{abstract}
A finite set $P $ of points in the plane is 
$n$-universal with respect to a class $\mathcal{C}$ of 
planar graphs if every $n$-vertex graph in
$\mathcal{C}$ admits a crossing-free straight-line
drawing with vertices at points of $P$.

For the class of all planar graphs the best known upper
bound on the size of a universal point set is quadratic
and the best known lower bound is linear in~$n$.

Some classes of planar graphs are known to admit universal
point sets of near linear size, however, there are no
truly linear bounds for interesting classes beyond
outerplanar graphs.

In this paper, we show that there is a universal point
set of size $2n-2$ for the class of bipartite planar graphs
with $n$ vertices. The same point set is also universal
for the class of $n$-vertex planar graphs
of maximum degree~$3$. The point set used for the
results is what we call an exploding double chain, and we prove that this point set
allows planar straight-line embeddings of many more planar graphs, namely of all subgraphs of  planar graphs
admitting a one-sided Hamiltonian cycle.

The result for bipartite graphs also implies that
every $n$-vertex plane graph has a 1-bend drawing all
whose bends and vertices are contained in a specific
point set of size $4n-6$, this improves a bound of
$6n-10$ for the same problem by Löffler and Tóth.
  
	\keywords{Graph drawing, \and Universal point set, \and One-sided Hamiltonian, 
    \and 2-page book embedding, \and Separating decomposition, \and Quadrangulation, \and 2-tree, \and Subcubic planar graph}
\end{abstract}

\section{Introduction}
Given a family $\mathcal{C}$ of planar graphs and a positive integer $n$, a
point set $P \subseteq \mathbb{R}^2$ is called an \emph{$n$-universal point
	set} for the class $\mathcal{C}$ or simply \emph{$n$-universal} for
$\mathcal{C}$ if for every graph $G \in \mathcal{C}$ on~$n$ vertices there
exists a straight-line crossing-free drawing of $G$ such that
every vertex of $G$ is placed at a point of $P$.

To determine the minimum size of universal sets for classes of planar graphs
is a fundamental problem in geometric graph theory, see
e.g.~Problem~\cite{OPG} in the Open Problem Garden. More specifically, the
quest is for good bounds on the minimum size $f_{\mathcal{C}}(n)$ of an
$n$-universal point set for a class $\mathcal{C}$. 

Schnyder~\cite{Schnyder1990} showed that for $n \ge 3$ the
$[n-1]\times [n-1]$-grid forms an $n$-universal point set for planar graphs,
even if the combinatorial embedding of the planar graph is prescribed. This
shows that $f(n):=f_{\mathcal{P}}(n)\leq n^2\in O(n^2)$, where~$\mathcal{P}$ is the
class of all planar graphs.
 Asymptotically, the quadratic upper bound on $f(n)$
remains the state of the art. Only the multiplicative constant in this
bound has seen some improvement, the current upper bound is
$f(n) \le \frac{1}{4}n^2+O(n)$ by Bannister et
al.~\cite{BannisterCDE2014}.
For several subclasses $\mathcal{C}$ of planar
graphs, better upper bounds are known: A classical result by Gritzmann et
al.~\cite{GritzmannMoharPachPollack1991} is that every outerplanar $n$-vertex
graph embeds straight-line on \emph{any} set of $n$ points in general
position, and hence $f_{\textrm{out-pl}}(n)=n$. Near-linear upper bounds of
$f_\mathcal{C}(n)=O(n\;\text{polylog}(n))$ are known for $2$-outerplanar
graphs, simply nested graphs, and for the classes of bounded
pathwidth~\cite{Angelini2018,BannisterCDE2014}. Finally, for the class
$\mathcal{C}$ of planar $3$-trees (also known as Apollonian networks or
stacked triangulations), $f_\mathcal{C}(n)=O(n^{3/2}\log n)$ has been proved
by Fulek and T\'{o}th~\cite{FulekToth2013}.

As for lower bounds, the trivial bounds $n \le f_\mathcal{C}(n) \le f(n)$ hold
for all $n \in \mathbb{N}$ and all planar graph classes $\mathcal{C}$. The
current lower bound $f(n) \ge 1.293n-o(n)$ from~\cite{JGAA-529} has
been shown using planar $3$-trees, we refer
to~\cite{CardinalHoffmannKusters2015,Chrobak1989,DeFraysseixPachPollack1990,Kurowski2004}
for earlier work on lower bounds.

Choi, Chrobak and Costello~\cite{Choi} recently proved
that point sets chosen uniformly at random from the unit square must have size
$\Omega(n^2)$ to be universal for $n$-vertex planar graphs with high
probability. This suggests that universal point sets of size $o(n^2)$ -if they
exist- will not look nice, e.g., they will have a large ratio between
shortest and largest distances.


In this paper we study a specific ordered point set $H$ (the exploding double
chain)
and denote the initial piece of size $2n-2$ in $H$ as $H_n$.
Let $\cC$ be the class of all planar graphs $G$ which have a plane
straight-line drawing on the point set~$H_n$ where $n=|V(G)|$.  That is, $H_n$
forms an $n$-universal point set for $\cC$.

A graph is {POSH} (partial one-sided Hamiltonian) if it is a spanning subgraph
of a graph admitting a plane embedding with a one-sided Hamiltonian
cycle (for definitions see Section~\ref{sec:good}).
Triangulations with a one-sided Hamiltonian cycle have been studied before by
Alam et al.~\cite{AlamBFKKU13} in the context of cartograms. They conjectured
that every plane 4-connected triangulation has a one-sided Hamiltonian
cycle. Later Alam and Kobourov~\cite{AlamK12} found a plane 4-connected
triangulation on 113 vertices which has no one-sided Hamiltonian cycle.

Our main result (Theorem~\ref{thm:embedding}) is that
every POSH graph is in $\cC$.  We let 
\[\cC' := \{ G : G\text{ is POSH}\}.\]

Theorem~\ref{thm:embedding} motivates further study of $\cC'$. On the
positive side we show that every bipartite plane graph is POSH (proof in
Section~\ref{sec:bipartite}). We proceed to use the construction for bipartite
graphs to show that subcubic planar graphs have a POSH embedding in
Section~\ref{sec:cubic}. On the negative side, we also show
that not all $2$-trees are POSH. We conclude with some conjectures and open
problems in Section~\ref{sec:conclusion}.

An exploding double chain was previously used
by L{\"{o}}ffler and T{\'{o}}th~\cite{LofflerT15}.  They show that every
planar graph with $n$ vertices has a 1-bend drawing on a subset $S_n$ of $H$
with $|S_n|=6n-10$. Our result about bipartite graphs implies a
better bound: 

\begin{corollary}
    There is a point set $P=H_{2n-2}$ of size $4n-6$ such that every $n$-vertex planar graph 
    admits a 1-bend drawing with bends and vertices on $P$.
\end{corollary}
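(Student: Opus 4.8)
The plan is to reduce a one-bend drawing of $G$ to a straight-line drawing of a bipartite plane graph on few vertices, so that the results on $\cC$ apply directly. Since deleting edges from a one-bend drawing can only destroy crossings, I would first augment $G$ to a plane triangulation $T$ on the same $n$ vertices (triangulating the outer face as well), produce a one-bend drawing of $T$, and then discard the added edges to obtain one for $G$ with all vertices and bends still on the point set. It therefore suffices to treat the case $G = T$, a triangulation with $3n-6$ edges and $2n-4$ triangular faces.

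The key step is to make $T$ bipartite by subdividing as few edges as possible, each subdivision vertex becoming the single bend of its edge. A plane graph is bipartite exactly when every face has even length, so I want a set $S \subseteq E(T)$ meeting every (triangular) face in an \emph{odd} number of edges; subdividing each edge of $S$ once then turns every face from a triangle into a quadrangle, and hence makes all cycles even. Passing to the dual, $S$ must hit every vertex of $T^\ast$ an odd number of times, and the cheapest way to achieve this is to take $S$ dual to a \emph{perfect matching} of $T^\ast$. Here $T^\ast$ is a cubic planar multigraph, and it is bridgeless because $T$ is $2$-edge-connected; Petersen's theorem therefore guarantees a perfect matching $M$. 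Its $n-2$ dual edges form the desired set $S$: each face is incident to exactly one of them, so every face length becomes even and the subdivided graph $T_S$ is a bipartite plane graph.

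It remains to embed $T_S$. It has $n + |S| = 2n-2$ vertices, and by the bipartite case of our main result it is POSH, so by Theorem~\ref{thm:embedding} it admits a plane straight-line drawing on $H_{2n-2}$, a point set of size $2(2n-2)-2 = 4n-6$. Reading the $n$ original vertices as vertices and the $n-2$ subdivision vertices as bends, this straight-line drawing is precisely a one-bend drawing of $T$ with all vertices and bends on $P := H_{2n-2}$; restricting to $G$ then finishes the argument.

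I expect the only genuine obstacle to be the subdivision count in the middle step. A naive subdivision of every edge would yield a bipartite graph on up to $4n-6$ vertices, which only embeds on $H_{4n-6}$ of size $8n-14$ and would be worse than the bound of L\"offler and T\'oth; getting down to $2n-2$ vertices forces $|S| = n-2$, and it is exactly the perfect-matching-in-the-dual argument (Petersen's theorem together with $2$-edge-connectivity of the triangulation) that makes this tight. Minor care is needed for the small cases $n \in \{3,4\}$ and to verify that triangulating the outer face makes every face odd before subdivision, so that the parity bookkeeping indeed produces a bipartite $T_S$.
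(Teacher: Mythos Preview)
Your argument is correct and is essentially the same as the paper's: triangulate, take a perfect matching in the (bridgeless cubic) dual via Petersen's theorem to find $n-2$ edges whose subdivision makes the graph bipartite, and then apply Theorem~\ref{thm:bipartite} together with Theorem~\ref{thm:embedding} to embed the resulting $(2n-2)$-vertex bipartite graph on $H_{2n-2}$. You spell out the parity reasoning and the vertex count more explicitly than the paper does, but the route is identical.
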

\begin{proof}
  The dual of a plane triangulation is a bridgeless 3-regular graph of $2n-4$ vertices; 
  it has a perfect matching by Petersen's Theorem \cite{Petersen1891}. 
  Hence, subdividing at most $n-2$ edges can make any planar graph on $n$ vertices bipartite. 
  Thus $H_{n+n-2}$ of size $2(n+n-2)-2 = 4n-6$ is sufficient
  to accomodate 1-bend drawings of all $n$-vertex planar graphs.
\end{proof}  
Universality for 1-bend and 2-bend drawings 
with no restriction on the placement of bends
has been studied by Kaufmann and Wiese~\cite{KaufmannW02}, they show that
every~$n$-element point set is universal for 2-bend drawings of planar graphs.

\section{The point set and the class of POSH graphs}\label{sec:good}

In this section we define the exploding double chain $H$ and the class
$\cC'$ of POSH graphs and show that for every $n \ge 2$ the initial part $H_n$
of size $2n-2$ of $H$ is $n$-universal for $\cC'$.

A sequence $(y_i)_{i\in \NN{}}$ of real numbers satisfying $y_1=0$, $y_2=0$ is
\emph{exploding} and 
the corresponding point set $H = \{p_i,q_i|{i \in \NN}\}$, where $p_i=(i,y_i), q_i=(i,-y_i)$, 
is an \emph{exploding double chain}, if for all $n\in\NN$, $y_{n+1}$ is large enough that 
all intersections of lines going through two points of $H_n=\{p_i,q_i|{i \in [n]}\}$ 
with the line $x=n+1$ lie strictly between $y_{n+1}$ and $-y_{n+1}$. 
It is $p_1=q_1$ and $p_2=q_2$, thus $|H_n|=2n-2$. Figure~\ref{fig:exploding} shows $H_6$.
  This fully describes the order type of the exploding double chain. 
  Note that the coordinates given here can be made integers, 
  but the largest coordinate of $H_n$ is exponential in $n$, which is unavoidable for the order type. 
  However, the ratio of largest to smallest distance does not have to be: 
  We can alter the construction setting $y_i=i$, but letting the $x$-coordinates grow slowly enough
  as to achieve the same order type, but with a linear ratio.
  
An explicit construction of a point set $H$ in this order type is given now.

A sequence $Y=(y_i)_{i\ge 1}$ of real numbers satisfying $y_1=0$, $y_2=0$, and
$y_{i+1}> 2y_{i}+y_{i-1}$ for all $i \ge 2$ is exploding. Note that if
$\alpha > 1+\sqrt{2}$, then $y_1=y_2=0$ and $y_i=\alpha^{i-3}$ for $i\geq 3$ is an
exploding sequence, e.g. $\alpha =3$. Given an exploding sequence $Y$ let $P(Y)=(p_i)_{i\ge 1}$
be the set of points with $p_i=(i,y_i)$ and let $\bar{P}(Y) = (q_i)_{i\ge 1}$
be the set of points with $q_i=(i,-y_i)$, i.e., the point set 
reflected at the $x$-axis.

Let $H=H(Y)$ for some exploding sequence $Y$.
For two points $p$ and $q$ let $H(p,q)$ be the set of points of $H$ in the open
right half-plane of the directed line $\overrightarrow{pq}$. 
Note that\footnote{In cases where $i$ or $j$ are in $\{1,2\}$ the following may list
	one of the two points defining the halfspace with its second name as member of the
	halfspace. For correctness such listings have to be ignored.}
$$
H(p_i,q_j) =
\begin{cases}
	(p_k)_{k \leq j} \cup (p_k)_{k > i} \cup (q_\ell)_{\ell < j} & 
	\textrm{if}\quad i > j \\
	(p_k)_{k < i} \cup (q_\ell)_{\ell < i}   & \textrm{if}\quad i = j \\
	(p_k)_{k < i} \cup (q_\ell)_{\ell \leq i} \cup (q_\ell)_{\ell > j} & 
	\textrm{if}\quad i < j
\end{cases}
$$
Moreover, if $i< j$ then $H(q_i,q_j) = H(p_i,q_j) \setminus \{q_i\}$ and if 
$i > j$ then
$H(p_i,p_j) = H(p_i,q_j) \setminus \{p_j\}$.
These sidedness conditions characterize the order type of the exploding double chain.
   \calc_figscale{20}
    \begin{figure}[htb]
    \centerline{\input{\path/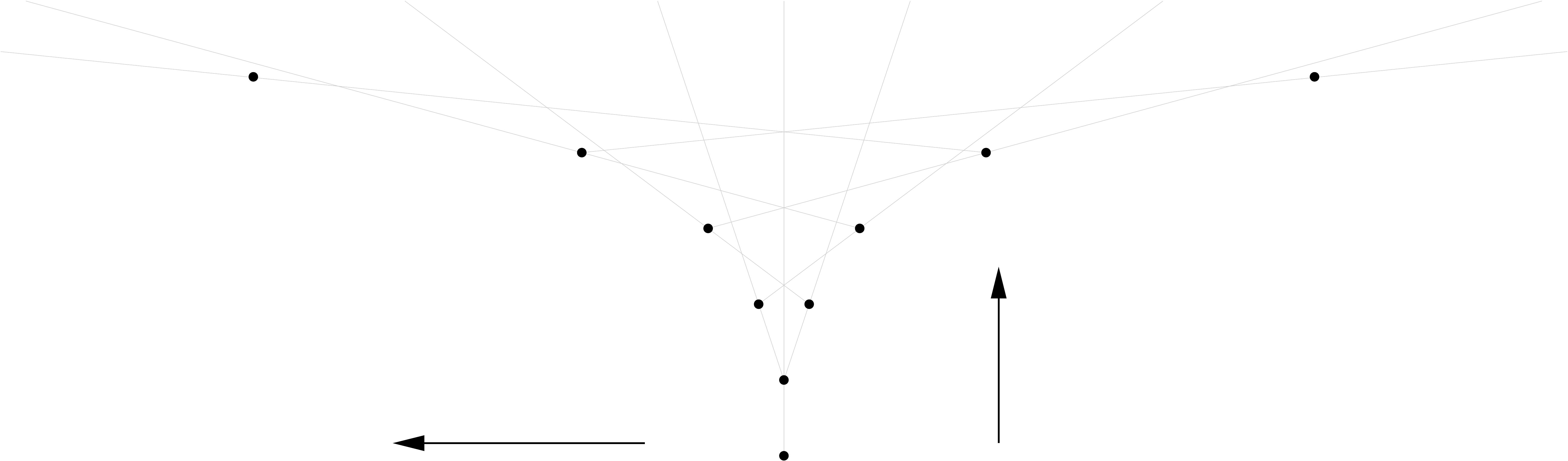tex_t}}
    \caption{ An example of a point set $H_6$ in a rotated 
	coordinate system.\label{fig:exploding}}
    \end{figure}
    

      A plane graph $G$ has a \emph{one-sided Hamiltonian cycle with special
      edge $vu$} if it has a Hamiltonian cycle $(v=v_1,v_2,\ldots,v_n=u)$ such
      that $vu$ is incident to the outer face and for every $j=2,\ldots,n$,
      the two edges incident to $v_j$ in the Hamiltonian cycle, 
      i.e., edges $v_{j-1}v_j$ and $v_{j+1}v_j$, are consecutive in the rotation of $v_j$
     in the subgraph induced by $v_1,\ldots,v_j,v_{j+1}$ in $G$. 
     In particular, the one-sided condition depends on the Hamiltonian cycle, its direction and
     its special edge.
A more visual reformulation of the second condition is obtained using the closed
      bounded region $D$ whose boundary is the Hamiltonian cycle. It is that
      in the embedding of $G$ for every $j$ either all the back-edges $v_iv_j$
      with $i < j$ are drawn inside $D$ or in the open exterior of $D$. We let
      $V_I$ be the set of vertices~$v_j$ which have a back-edge $v_iv_j$ with
      $i < j-1$ drawn inside $D$ and $V_O = V \setminus V_I$.  The set $V_I$
      is the set of vertices having back-edges only inside $D$ while vertices
      in $V_O$ have back-edges only outside~$D$.

   \calc_figscale{32}
    \begin{figure}[htb]
    \centerline{\input{\path/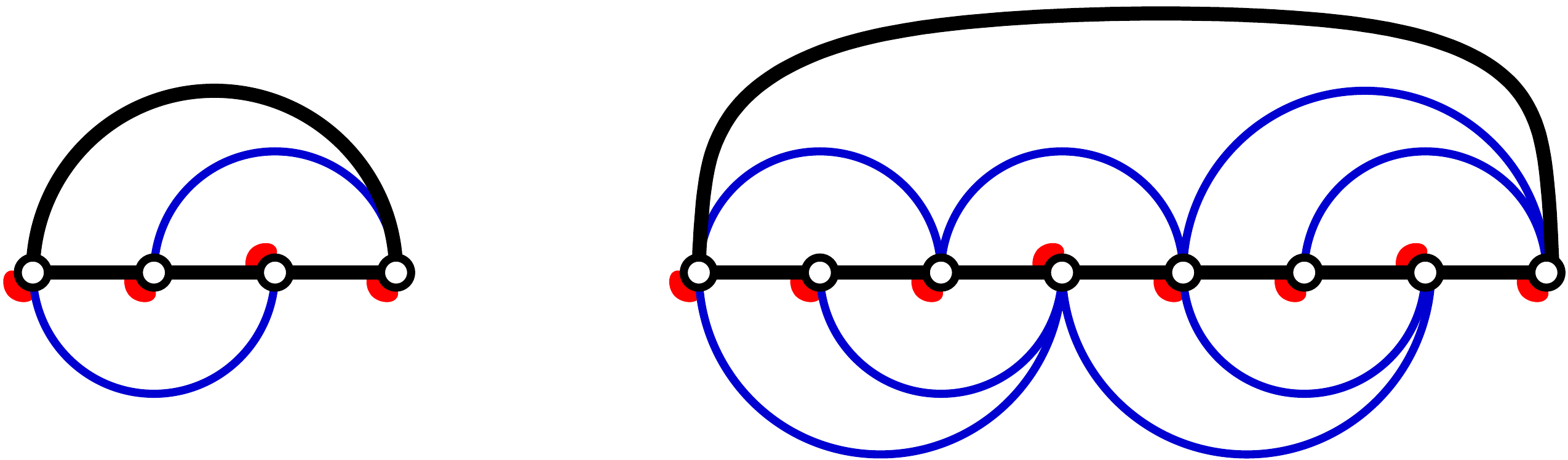tex_t}}
    \caption{\label{fig:one-sided2}}
    \end{figure}
    VC
{ $K_4$ and a slightly larger graph both with a one-sided Hamiltonian cycle. 
Red angles indicate a side with no back-edge. }

Recall that $\cC'$ is the class of planar graphs which are
\emph{spanning} subgraphs of plane graphs admitting a one-sided Hamiltonian cycle. 
It is worth noting all subgraphs are POSH.

\begin{proposition}
    Any subgraph of a POSH graph is POSH.
\end{proposition}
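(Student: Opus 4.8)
The plan is to separate the two ways in which a subgraph can be smaller than the POSH graph $G$ it sits inside: deleting edges and deleting vertices. By definition $G$ is a spanning subgraph of some plane graph $G^*$ carrying a one-sided Hamiltonian cycle $C=(v_1,\ldots,v_n)$ with special edge $v_1v_n$. Any subgraph $H$ of $G$ is in particular a subgraph of $G^*$, and if $S=V(G^*)\setminus V(H)$ denotes the deleted vertices, then $H$ is a spanning subgraph of $G^*-S$. Since being POSH means being a spanning subgraph of a one-sided Hamiltonian plane graph, and ``is a spanning subgraph of'' is transitive, it suffices to prove that $G^*-S$ is POSH; the edge deletions then come for free. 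I would therefore reduce, by induction on $|S|$, to a single statement: if $K$ is a plane graph with a one-sided Hamiltonian cycle and $v$ is any vertex, then $K-v$ is POSH. The induction goes through because deleting the same vertex on both sides of a spanning-subgraph relation preserves it.

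To delete one vertex $v=v_k$, I would remove it from the plane embedding and reconnect its two cycle-neighbours by a new edge $v_{k-1}v_{k+1}$ (read cyclically, so $v_0=v_n$ and $v_{n+1}=v_1$), producing a plane graph $K'$ with Hamiltonian cycle $C'=(v_1,\ldots,v_{k-1},v_{k+1},\ldots,v_n)$. As $K-v_k$ is a spanning subgraph of $K'$, it is enough to check that $C'$ is again one-sided. For an interior $v_k$ the old special edge $v_1v_n$ survives and stays on the outer face; for $k\in\{1,n\}$ the new edge itself becomes the special edge and lies on the enlarged outer face. The crucial point is placing the new edge correctly: the one-sided condition at $v_k$ says that $v_{k-1}v_k$ and $v_{k+1}v_k$ are consecutive in the rotation at $v_k$ inside the subgraph induced by $v_1,\ldots,v_{k+1}$, so on the side of $v_k$ carrying no back-edges there is a clear region in which to route $v_{k-1}v_{k+1}$. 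Deleting $v_k$ merges the faces around it into one face, and I would route the new edge through the back-edge-free side of this face, inserting it at $v_{k-1}$ and at $v_{k+1}$ into the precise slots vacated by the edges to $v_k$.

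With the edge placed this way I would verify the one-sided condition at every vertex of $C'$. For a vertex whose defining induced subgraph does not involve $v_k$ or the new edge, the condition is literally inherited from $K$: deleting $v_k$ only removes one (back- or forward-) edge from a rotation, which cannot destroy the consecutiveness of a pair of edges that was already consecutive. The only vertices needing a genuine argument are $v_{k-1}$ and $v_{k+1}$: at each of them the new edge occupies the former $v_k$-slot and so becomes consecutive with $v_{k-2}v_{k-1}$, respectively $v_{k+2}v_{k+1}$, exactly because those pairs were consecutive in $K$ by the one-sided conditions at $v_{k-1}$ and $v_{k+1}$. Here it matters that these conditions are stated for induced subgraphs, so the possible forward-edges of $v_{k-1}$ and $v_{k+1}$, which could otherwise break consecutiveness in the full graph, simply do not appear.

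The main obstacle I anticipate is this last geometric step: arguing that the reconnecting edge can be drawn through the vacated face while simultaneously landing in the correct rotational slot at both endpoints. This is precisely where the one-sided condition at the deleted vertex is used, since it guarantees that $v_{k-1}$ and $v_{k+1}$ are consecutive on the back-edge-free side, so that the face left behind after deletion really does expose them to one another there. Because $K$ has a Hamiltonian cycle it is $2$-connected, which keeps the faces well behaved; the handful of degenerate cases with at most two remaining vertices are trivially POSH and can be dispatched separately.
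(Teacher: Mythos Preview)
Your approach is correct and matches the paper's: reduce to deleting a single vertex from the one-sided Hamiltonian supergraph and reconnect its two cycle-neighbours by an edge routed where the two deleted cycle edges used to be. The paper's proof is much terser---it simply asserts the resulting cycle is one-sided without the vertex-by-vertex verification you supply---and it phrases the routing as ``along the two edges of $v$ along the cycle'' rather than via the back-edge-free side, but the construction and the underlying reasoning are identical.
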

\begin{proof}
    As edge deletions preserve the POSH property by definition, it suffices to show that deleting a vertex preserves it as well. Let $G$ be a POSH graph and let $G'$ be its supergraph with a one-sided Hamiltonian cycle. Now after deleting $v$ from $G'$, adding an edge between its neighbours on the Hamiltonian cycle (if it does not exist) can be done along the two edges of $v$ along the cycle. This is a supergraph of $G\setminus v$ with a one-sided Hamiltonian cycle.
\end{proof}

\section{The embedding strategy}

Our interest in POSH graphs is motivated by the following theorem.

\begin{theorem}\label{thm:embedding}
	Let $G'$ be POSH and let $v_1,\ldots,v_n$ be
	a one-sided Hamiltonian cycle of a plane supergraph $G$ of $G'$ on the same vertex set.
	Then there is a crossing-free embedding of~$G'$ on $H_n$ with the property that
	$v_i$ is placed on either $p_i$ or $q_i$.
\end{theorem}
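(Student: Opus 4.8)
The plan is to forget the subgraph $G'$ entirely. Since $G'$ is a spanning subgraph of $G$, any crossing-free straight-line drawing of $G$ with each $v_i$ on $p_i$ or $q_i$ restricts to such a drawing of $G'$; so I would prove the statement directly for the plane graph $G$ with its one-sided Hamiltonian cycle $v_1,\dots,v_n$ and special edge $v_1v_n$. The placement rule is dictated by the partition $V=V_I\cup V_O$: put $v_j$ on the upper point $p_j$ when $v_j\in V_I$ and on the lower point $q_j$ when $v_j\in V_O$ (fixing the orientation of $D$ so that its interior meets the upper half-plane along the special edge); the coincidences $p_1=q_1$ and $p_2=q_2$ make the placement of $v_1,v_2$ forced and automatically consistent. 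The whole argument then runs by induction on $n$, adding the vertices from left to right along the columns.

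For the inductive step I would delete the last vertex $v_n$. Exactly as in the proof that subgraphs of POSH graphs are POSH, removing $v_n$ and, if necessary, inserting the shortcut edge $v_{n-1}v_1$ through the face vacated by $v_n$ yields a plane graph $\hat G$ on $v_1,\dots,v_{n-1}$ with one-sided Hamiltonian cycle $v_1,\dots,v_{n-1}$ and special edge $v_{n-1}v_1$. Crucially, deleting $v_n$ changes no back-edge of a vertex $v_i$ with $i<n$ (for such $v_i$ the edge $v_iv_n$ is a forward edge, not a back-edge), so the partition $V_I\cup V_O$, and hence the prescribed chain of each $v_i$, is unchanged. By the induction hypothesis $\hat G$, and therefore $G-v_n$, embeds on $H_{n-1}$ with each $v_i$ on its prescribed point. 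It remains to place $v_n$ on $p_n$ or $q_n$ according to whether $v_n\in V_I$ or $v_n\in V_O$, and to draw its edges to its neighbours, all of which carry an index $<n$ and are thus already placed.

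The heart of the matter, and the step I expect to be the main obstacle, is to show that the edges newly drawn from $v_n$ cross nothing. Edges sharing the apex $v_n$ cannot cross each other, so the only danger is a new edge $v_nv_i$ crossing an already drawn edge $v_av_b$ with $a,b,i<n$. Here the exploding property enters: because $y_n$ exceeds every intersection of a line through two points of $H_{n-1}$ with the column $x=n$, the point $p_n$ lies above all previously placed points and edges, and the angular order in which $v_1,\dots,v_{n-1}$ are seen from $p_n$ is monotone in the index (symmetrically for $q_n$). This reduces a potential crossing $v_nv_i\times v_av_b$ to the combinatorial situation $a<i<b$ with $v_i$ lying behind the segment $v_av_b$ as seen from $v_n$. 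Excluding this is where I would have to work: the one-sided condition forces the neighbours of $v_n$ to occur in index order around $v_n$ in the plane embedding, so an enclosing edge $v_av_b$ separating a neighbour $v_i$ from $v_n$ would already create a crossing in $G$. I would make this rigorous by appealing directly to the sidedness description of the order type, checking, for every edge $v_av_b$ of $\hat G$, that $v_i$ and the apex $v_n$ lie on the same side of the line through $v_a,v_b$ using the explicit sets $H(p_i,q_j)$, $H(q_i,q_j)$ and $H(p_i,p_j)$ recorded above.

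Conceptually, the reason the construction succeeds is the two-page picture: the upper and lower chains play the roles of the two pages of a book whose spine is the Hamiltonian path, and on a fixed page two straight chords cross if and only if their index intervals interleave. Since $G$ is plane, the back-edges inside $D$ are drawn without crossings in the disk bounded by the cycle, so in the boundary cyclic order $v_1,\dots,v_n$ they are pairwise non-interleaving; the same holds for the back-edges outside $D$. Hence redrawing each such family as straight segments introduces no crossing within a page. The one genuinely delicate point is that a back-edge may run between the two chains, its lower endpoint lying on the page opposite to the side of the edge; handling these mixed edges is precisely where the exact sidedness conditions of the exploding double chain, rather than mere convex position, are needed, and is the part of the argument I would write out in full detail.
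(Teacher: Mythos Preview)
Your setup and placement rule coincide with the paper's, and the left-to-right induction is the same. The substantive difference is the inductive hypothesis. The paper does not merely assume that $\bar G_{i-1}$ is crossing-free; it carries two extra invariants: (i) $\bar G_{i-1}$ and $G_{i-1}$ have the same outer boundary walk, and (ii) every vertex on the left boundary $\partial L_{i-1}$ of $G_{i-1}$ sees every $p_k$ with $k\ge i$, and symmetrically every vertex on $\partial R_{i-1}$ sees every $q_k$. With (ii) in hand the inductive step is immediate: if $v_i\in V_I$ then all its neighbours lie on $\partial L_{i-1}$ and hence see $p_i$, so the new edges are crossing-free. The real work is re-establishing (ii) after adding $v_i$, which the paper does by a short sidedness computation.

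Your weaker hypothesis leaves exactly the gap you flag as ``the part I would write out in full detail''. Your proposed direct check does go through, and once you have reduced (via the monotone angular order from $p_n$) to $a<i<b<n$, it splits cleanly: if the obstacle $v_av_b$ is an \emph{outside} edge (so $\bar v_b=q_b$), the sidedness tables for $H(\,\cdot\,,q_b)$ show that every point in columns $a{+}1,\dots,b{-}1$ and every $p_k$ with $k>b$ lie on the same side of the line through $\bar v_a$ and $q_b$, so $\bar v_i p_n$ cannot cross it; if $v_av_b$ is an \emph{inside} edge, then $v_av_b$ and $v_iv_n$ are interleaving chords of the disc $D$ and would already cross in the plane embedding of $G$, a contradiction. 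This two-case split is precisely the content of the paper's visibility invariant, unpacked at the final step only. So your sketch is completable, but stating and maintaining the visibility invariant explicitly is the missing organising idea; it is what turns your ``delicate mixed-edge'' case into a routine verification rather than an open-ended case analysis.
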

\begin{proof}
	
	It is sufficient to describe the embedding of the supergraph $G$ on $H_n$.
	For the proof we assume that in the plane drawing of $G$ the sequence
	$v_1,\ldots,v_n$ traverses the boundary of~$D$ in counter-clockwise
	direction. For each $i$ vertex $v_i$ is embedded at $\bv_i=p_i$ if
	$v_i \in V_I$ and at $\bv_i=q_i$ if $v_i \in V_O$.
	
	Let $G_i = G[v_1,\ldots,v_i]$ be the subgraph of $G$ induced by
	$\{v_1,\ldots,v_i\}$.  The path $\Lambda_i=v_1,\ldots,v_i$ separates $G_i$.
	The \emph{left part} $GL_i$ consists of the intersection of $G_i$ with $D$,
	the \emph{right part} $GR_i$ is $G_i$ minus all edges which are interior to~$D$. 
	The intersection of $GL_i$ and $GR_i$ is~$\Lambda_i$ and their union is $G_i$. 
	The counter-clockwise boundary walk of $G_i$ consists of a path
	$\partial R_i$ from $v_1$ to $v_i$ which is contained in $GR_i$ and a
	path from $v_i$ to $v_1$ which is contained in $GL_i$, let $\partial L_i$
	be the reverse of this path. 
	
	Let $\bG_i$ be the straight-line drawing of the plane graph $G_i$ obtained by
	placing each vertex~$v_j$ at the corresponding $\bv_j$. A vertex $\bv$ of
	$\bG_i$ is said to \emph{see a point} $p$ if there is no crossing between
	the segment $\bv p$ and an edge of $\bG_i$.  By induction on $i$ we show:
	\begin{enumerate}
		\item\label{itm1}
		The drawing $\bG_i$ is plane, i.e., non-crossing.\smallskip
		\item\label{itm2}
		$\bG_i$ and $G_i$ have the same outer boundary walks.\smallskip
		\item\label{itm3}
		Every vertex of $\partial L_i$ in $\bG_i$ sees all the points $p_j$ with $j>i$
		and every vertex of $\partial R_i$ in $\bG_i$ sees all the points $q_j$ with $j>i$.
	\end{enumerate}
	
	For $i=2$ the graph $G_i$ is just an edge and the three claims are immediate,
	for Property~\ref{itm3} just recall that the line spanned by $p_1$ and $p_2$ separates
	the $p$-side and the $q$-side of $H_n$.
	
	Now assume that $i \in \{3,\ldots,n\}$, the properties are true for $\bG_{i-1}$ and
	suppose that $v_{i}\in V_I$ (the argument in the case $v_i \in V_O$ works symmetrically).
	This implies that all the back-edges of $v_i$ are in the interior of $D$ whence all
	the neighbors of $v_i$ belong to $\partial L_{i-1}$. Since $v_{i}\in V_I$ we
	have $\bv_i = p_i$ and Property \ref{itm3} of $\bG_{i-1}$ implies that the edges
	connecting to $\bv_i$ can be added to $\bG_{i-1}$ without introducing a crossing.
	This is Property \ref{itm1} of $\bG_{i}$.
	
	Since $G_{i-1}$ and $\bG_{i-1}$ have the same boundary walks and $v_i$ (respectively~$\bv_i$)
	belong to the outer faces of $G_{i}$ (respectively $\bG_{i}$) and since $v_i$
	has the same incident edges in $G_i$ as $\bv_i$ in $\bG_{i}$, 
	the outer walks of $G_{i}$ and $\bG_{i}$ again equal each other, i.e., 
	Property \ref{itm2}.
	
	Let $j$ be minimal such that $v_jv_i$ is an edge and note that $\partial L_i$
	is obtained by taking the prefix of $\partial L_{i-1}$ whose last vertex is
	$v_j$ and append $v_i$. The line spanned by $\bv_j$ and $\bv_i=p_i$ separates
	all the edges incident to $\bv_i$ in $\bG_i$ from all the segments
	$\bv_\ell p_k$ with $\ell < j$ and $\bv_\ell\in \partial L_i$ and $k>i$. This
	shows that every vertex of $\partial L_i$ in $\bG_i$ sees all the points $p_k$
	with $k>i$.  For the proof of the second part of Property \ref{itm3} 
	assume some edge $\bv_i\bv_j$ crosses the line of sight from $\bv_l$ to $q_k, k>i$, we refer
	to Figure~\ref{fig:unobst}. 
	First note that this is only possible if $l\leq j$, 
	since otherwise $\bv_j\bv_l$ separates $\bv_i=p_i$ and $q_k$, 
	because $p_i$ is on the left as can be seen at $x=i$ and
	$q_k$ is on the right as can be seen at $x=k$ by definition.
	Since $j=l$ is impossible by construction, we are left with the case $l<j$.
	Then one of $\bv_i$ and $\bv_l$, say $\bv$, lies to the right of the oriented line $\bv_jq_k$. 
	However that implies that $\bv_j\bv$ has $q_k$ on its left,
	which is a contradiction to the definition of $q_k$ at $x=k$.
	This completes the proof of Property \ref{itm3} and thus the inductive step.
	
	Finally,
	Property~\ref{itm1} for $\bG_n$ implies the theorem.
\end{proof} 

   \calc_figscale{20}
    \begin{figure}[htb]
    \centerline{\input{\path/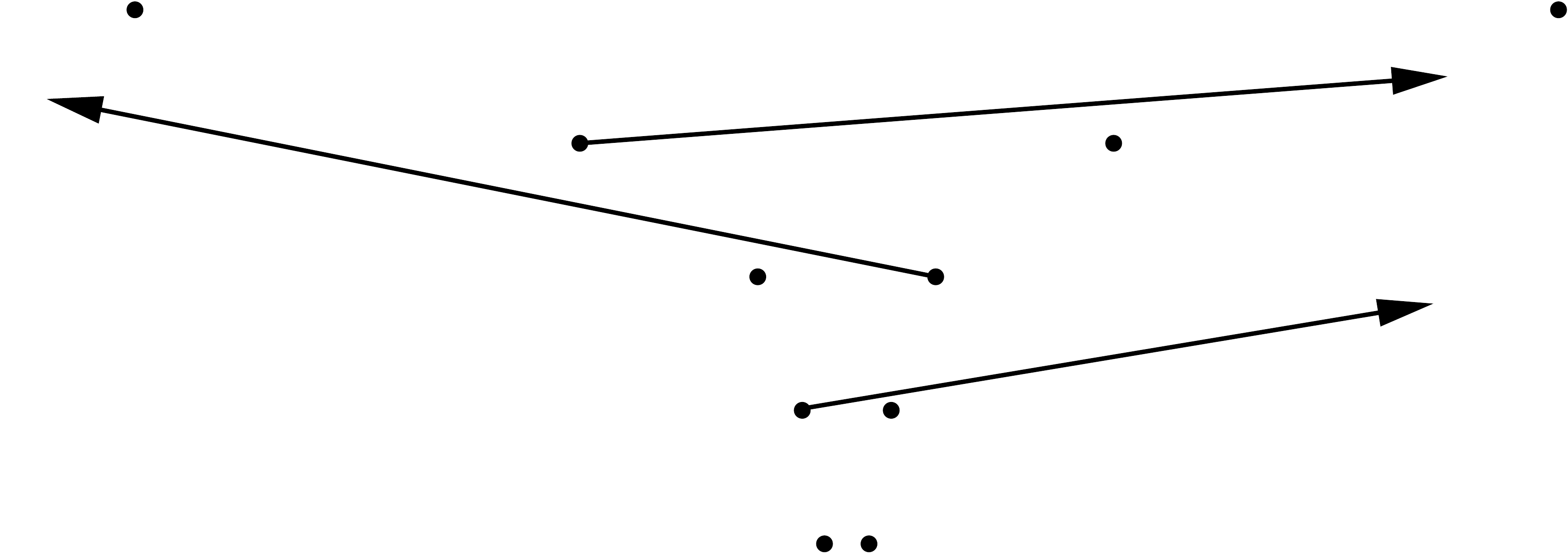tex_t}}
    \caption{ Vertices from $\partial R_i$ see $q_k$\label{fig:unobst}}
    \end{figure}
    

\section{Plane bipartite graphs}
\label{sec:bipartite}

In this section we consider bipartite plane graphs and show that they are
POSH.

\begin{theorem}\label{thm:bipartite}
	Every bipartite plane graph $G=(V,E)$ is a subgraph of a plane
	graph $G'$ on the same vertex set $V$ which has a one-sided Hamiltonian
	cycle, i.e., $G$ is POSH.
\end{theorem}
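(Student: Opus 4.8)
The plan is to reduce the statement to plane quadrangulations and then to use separating decompositions as the combinatorial device that produces the one-sided Hamiltonian structure.

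First I would reduce to quadrangulations. Since every subgraph of a POSH graph is again POSH (the Proposition above), it suffices to embed $G$ into a POSH supergraph on the same vertex set $V$. Every bipartite plane graph can be augmented, by inserting chords into its faces, to a plane quadrangulation $Q$ on the vertex set $V$ with the two poles $s,t$ on the outer $4$-face; bipartiteness is preserved because all inserted chords join the two colour classes. As $G\subseteq Q$, it is enough to prove that every plane quadrangulation is POSH, that is, that it has a plane supergraph on $V$ with a one-sided Hamiltonian cycle.

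Next I would invoke a \emph{separating decomposition} of $Q$: an orientation together with a red/blue colouring of the edges in which every non-pole vertex has exactly two outgoing edges, one red and one blue, and around each such vertex the edges form four consecutive colour-homogeneous blocks, so that its incoming red edges and its incoming blue edges lie on opposite sides; the two poles are distinguished as the sinks of the two colours. Such a decomposition exists for every quadrangulation, and it yields a red spanning tree and a blue spanning tree. This is the heart of the construction: the local four-block pattern assigns to each vertex a \emph{side} for its chords, so that once the vertices are linearly ordered compatibly with the two trees, all back-edges of a given vertex are incoming edges of a single colour and hence lie on a single side. This is exactly the partition into $V_I$ and $V_O$ required by the definition of a one-sided cycle.

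Concretely, I would read off a linear order $v_1=s,\ldots,v_n=t$ from the separating decomposition (for instance via the associated bipolar orientation, i.e.\ a suitable simultaneous traversal of the two trees) and take as the supergraph $G'$ the graph $Q$ together with the spine edges $v_jv_{j+1}$ and the special edge $v_1v_n$ that are not already present. Hamiltonicity then holds by construction, and it remains to check (i) that adding the spine edges keeps the drawing plane, which should follow because consecutive vertices in the order are cofacial at the relevant stage of the decomposition, and (ii) the one-sided condition, which is immediate from the four-block pattern as explained above. Combined with Theorem~\ref{thm:embedding}, this shows that every bipartite plane graph is POSH.

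I expect the main obstacle to be precisely the simultaneous control of planarity of the spine and one-sidedness of the chords: the separating decomposition furnishes the side assignment through the edge colours and the two trees furnish a candidate order, but proving that the chosen order makes \emph{every} pair of consecutive vertices cofacial — so that all spine edges can be inserted without crossings — while keeping the colour, and hence the side, of each vertex's back-edges consistent, is a delicate global planarity argument rather than a purely local one. The reduction to quadrangulations and the mere existence of separating decompositions are standard; the careful bookkeeping linking the cyclic spine order to the one-sidedness is where the real work lies.
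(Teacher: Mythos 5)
Your proposal follows essentially the same route as the paper: reduce to quadrangulations via subgraph-closure of POSH, take a separating decomposition, and extract a spine order whose back-edges are colour-separated. The one step you flag as the ``real work''---finding an order in which consecutive vertices are cofacial so the spine can be added planarly while the red/blue side assignment stays consistent---is exactly the equatorial-line theorem of Felsner et al.\ (cited in the paper), which states that the two trees of a separating decomposition are separated by a curve from $s$ to $t$ traversing every vertex and every inner face, yielding the alternating 2-page book embedding in which every vertex is a left vertex in one tree and a right vertex in the other, so all its back-edges lie in a single tree and hence on a single page.
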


\begin{proof} Quadrangulations are the plane graphs with all faces of degree
four. Equivalently they are the maximal plane bipartite graphs, i.e., any bipartite plane
graph except stars is a subgraph of a quadrangulation.
Thus since POSH graphs are closed under 
taking  subgraphs, it suffices to prove the theorem for quadrangulations.

Let $Q$ be a quadrangulation and let $V_B$ and $V_W$ be the \emph{black} and \emph{white} vertices
of a 2-coloring. Label the two black vertices of the outer face as $s$ and
$t$. Henceforth, when talking about a quadrangulation we think of an embedded
quadrangulation endowed with $s$ and $t$.
A \emph{separating decomposition} is a pair $D=(Q,Y)$ where $Q$ is a
quadrangulation and $Y$ is an orientation and coloring of the edges of
$Q$ with colors red and blue such that:
\begin{enumerate}
	\item The edges incident to $s$ and $t$ are incoming in color red and blue, respectively.
	\item Every vertex $v\not\in \{s,t\}$ is incident to a non-empty interval of red edges and
	a non-empty interval of blue edges. If~$v$ is white, then, in clockwise order,
	the first edge in the interval of a color is outgoing and all the
	other edges of the interval are incoming. If~$v$ is black, the outgoing
	edge is the clockwise last in its color (see Figure~\ref{fig:vertex-2-cond}).
\end{enumerate}
   \calc_figscale{28}
    \begin{figure}[htb]
    \centerline{\input{\path/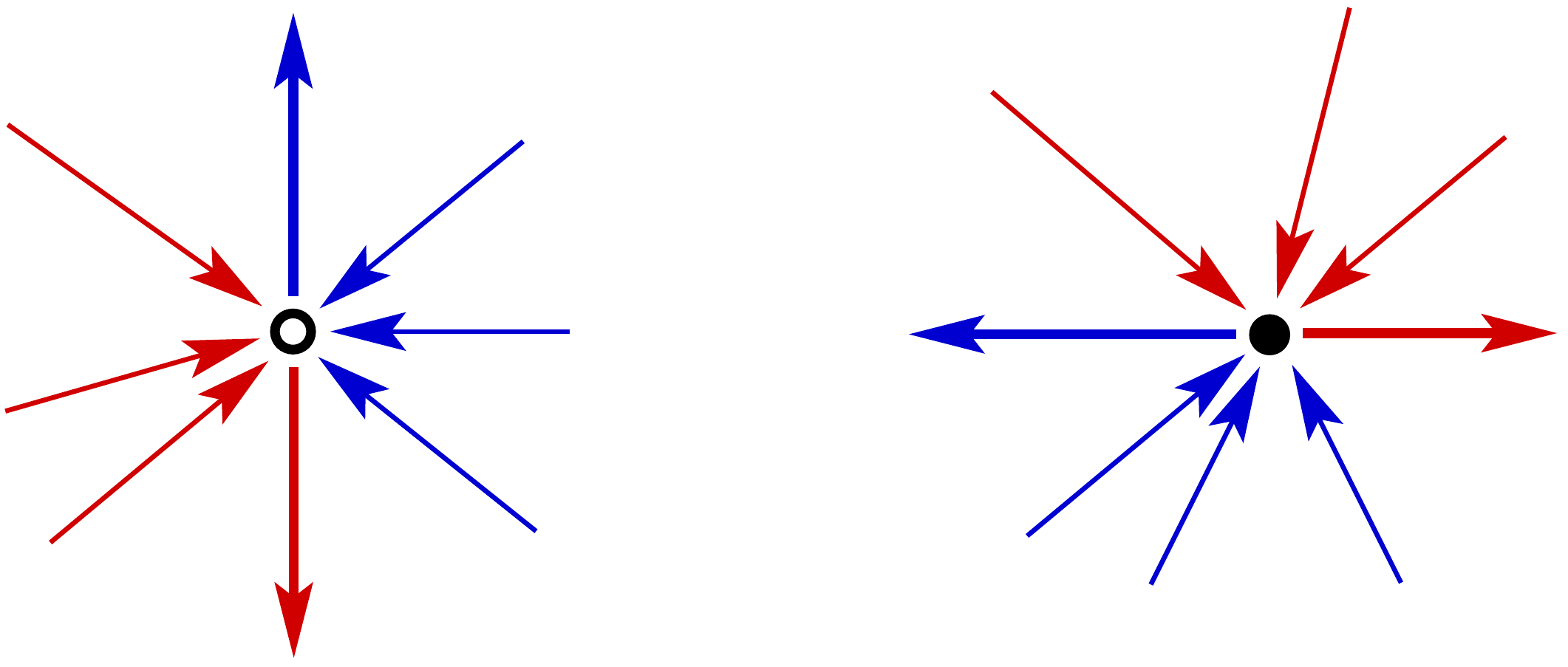tex_t}}
    \caption{\label{fig:vertex-2-cond}}
    \end{figure}
    VC
{ Edge orientations and colors at white and black vertices.\cite{FelsnerFusy}}

Separating decompositions of a quadrangulation $Q$ have been defined by
de~Fraysseix and Ossona de~Mendez~\cite{deMendez}.  They show a bijection
between separating decompositions and $2$-orientations (orientations of the
edges of $Q$ such that every vertex $v\not\in \{s,t\}$ has out-degree $2$) and show the
existence of a $2$-orientation of $Q$ with an argument related to flows and
matchings. An inductive proof for the existence of separating decompositions
was given by Felsner~et~al.~\cite{FHKO10}, this proof is based on identifying
pairs of opposite vertices on faces.

In a separating decomposition the red edges form a tree
directed towards $s$, and the blue edges form a tree directed
towards~$t$. Each of the trees connects all the vertices $v\not\in \{s,t\}$ to the
respective root. 
Felsner et al.~(\cite{FelsnerFusy,FHKO10}) show that the edges of the two trees 
can be separated by a curve which starts in $s$, ends in $t$,
and traverses every vertex and every inner face of $Q$.
This curve is called the \emph{equatorial line}.

If $Q$ is redrawn such that the equatorial line is mapped to the $x$-axis
with~$s$ being the left end and $t$ being the right end of the line, then the
red tree and the blue tree become \emph{alternating trees} (\cite{FHKO10}, defined below)
drawn in the upper respectively lower half-plane defined by the $x$-axis. Note
that such a drawing of $Q$ is a 2-page book embedding, we call it an
\emph{alternating 2-page book embedding} to emphasize that the graphs drawn on the
two pages of the book are alternating trees.

   \calc_figscale{20}
    \begin{figure}[htb]
    \centerline{\input{\path/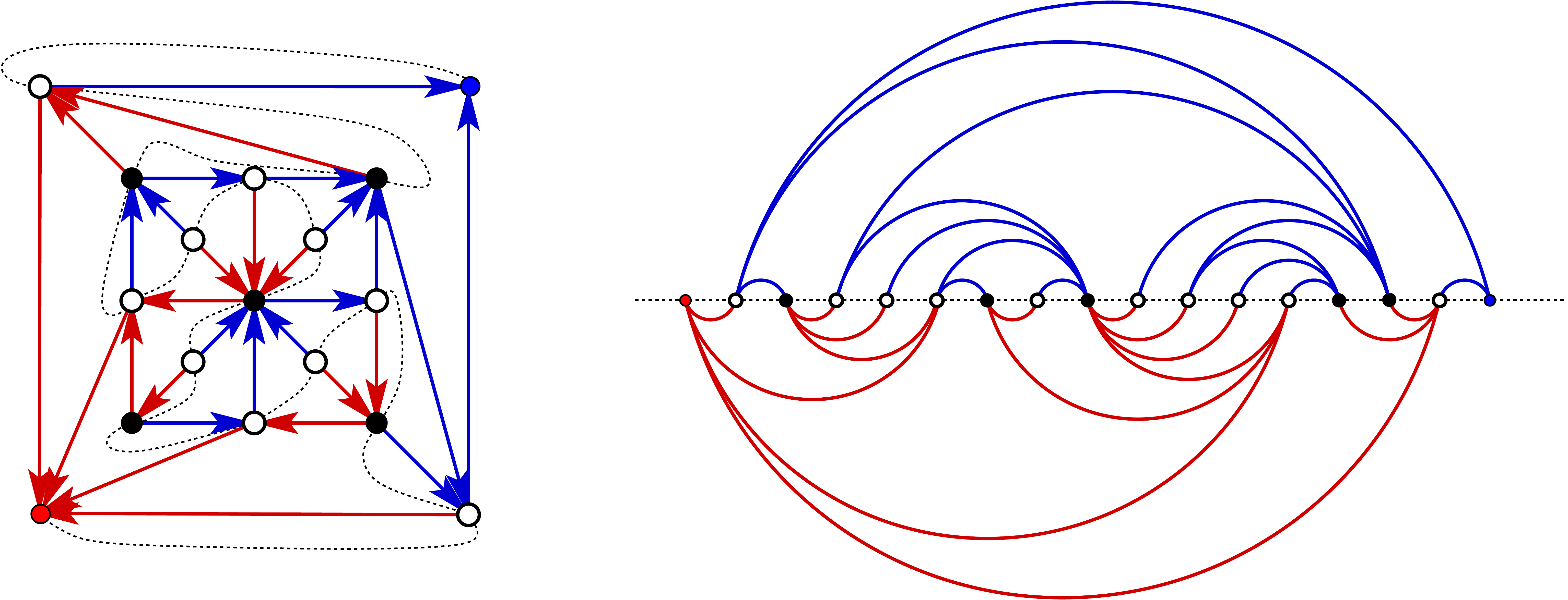tex_t}}
    \caption{\label{fig:sep2bookWithEqLine}}
    \end{figure}
    VC
{  A quadrangulation $Q$ with a separating decomposition $S$, and the
alternating 2-page book embedding induced by the equatorial line of $S$\cite{FelsnerFusy}.}

An \emph{alternating tree} is a plane tree $T$
with a plane drawing such that the vertices of $T$
are placed at different points of the $x$-axis and all edges are
embedded in the half-plane above the $x$-axis (or all below).
Moreover, for every vertex~$v$ it holds that all its neighbors are
on one side, either they are all left of~$v$ or all right of $v$.
In these cases we call the vertex~$v$ respectively a \emph{right}
or a~\emph{left vertex} of the alternating layout. 
Note that every vertex is a left vertex in one of the two trees and a right vertex in the other.

Let $Q$ be a plane quadrangulation on $n$ vertices and let $S$ be a separating
decomposition of $Q$. Let $s=v_1,v_2,\ldots,v_n=t$ be the spine of the
alternating 2-page book embedding of $Q$ based on $S$.  Let $Q^+$ be obtained from
$Q$ by adding $v_nv_1$ and all the edges $v_iv_{i+1}$ which do not yet belong
to the edge set of $Q$. By construction $v_1,v_2,\ldots,v_n$ is a Hamiltonian
cycle of $Q^+$ and since the trees are alternating, black vertices have only blue edges to the left and
white vertices have only red edges to the left. Thus this
Hamiltonian cycle is one-sided with reverse edge $v_nv_1= ts$. Hence $Q$ is
POSH.
\end{proof} 

It is worth noting that the Hamiltonian cycle read in the reverse direction,
i.e., as $v_n,v_{n-1},\ldots,v_1$, is again one-sided, now the reverse edge is
$v_1v_n=st$.

\section{Planar subcubic graphs}
\label{sec:cubic}

In this section we identify another large subclass of the $\cC'$.
Recall that 3-regular graphs are also known as cubic graphs and in subcubic graphs
all vertices have degree at most 3.

\begin{theorem}\label{thm:cubic}
  Every planar subcubic graph $G$ is a spanning subgraph of a planar
  graph~$G'$ which has an embedding with a one-sided Hamiltonian cycle, i.e.,
  $G$ has a POSH embedding.
\end{theorem}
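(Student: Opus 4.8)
The plan is to reduce the subcubic case to the bipartite case already handled by Theorem~\ref{thm:bipartite}. Since any subgraph of a POSH graph is POSH (by the Proposition), it suffices to embed $G$ into some bipartite planar graph, or more precisely to exhibit a planar supergraph $G'$ with a one-sided Hamiltonian cycle. The natural idea is that subcubic graphs are ``sparse enough'' to be made bipartite by a small perturbation. In particular, every edge that creates an odd cycle could be subdivided by inserting a degree-$2$ vertex; subdividing edges both keeps the graph planar and can destroy all odd cycles. So first I would try to show that one can subdivide a suitable set of edges of $G$ to obtain a bipartite planar graph $\hat G$, and then apply Theorem~\ref{thm:bipartite} to conclude $\hat G$ is POSH, hence so is $G$ as a subgraph (a subdivision contains $G$ as a minor but we need $G$ as a \emph{subgraph}, so the reduction cannot be literal subdivision).

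The subtlety just noted is the crux: POSH is closed under taking subgraphs, but a subdivision of $G$ does not contain $G$ as a subgraph, only as a topological minor. Therefore, rather than subdividing, I would aim to add the extra vertices as genuine new vertices so that $G$ sits as an induced/spanning subgraph of a larger bipartite graph on a superset of the vertex set, and then argue that POSH-ness transfers down correctly. Concretely, I plan to work with a planar embedding of $G$, and for each face insert vertices/edges so that the result is bipartite and still contains $G$ as a spanning subgraph on the original vertices — but the straightforward way of doing this changes degrees and may not stay planar or subcubic in a controlled way. This is where I expect the main obstacle to lie: simultaneously keeping the supergraph planar, bipartite, and ensuring that the original edges of $G$ are literally present.

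A cleaner route, and the one I would pursue seriously, exploits the acknowledgement hint in the paper: the authors thank colleagues for help with ``separating 2-cycles'' in the subcubic proof. This strongly suggests that the argument works by taking the planar dual or by building a quadrangulation directly from $G$ and then reading off a one-sided Hamiltonian cycle, with the technical nuisance being the handling of multi-edges / $2$-cycles that appear in the construction. So the refined plan is: (i) embed $G$ in the plane; (ii) construct from this embedding an associated bipartite plane graph (e.g.\ a radial/vertex-face incidence graph or quadrangulation) in which $G$ appears appropriately; (iii) invoke the separating-decomposition machinery of Theorem~\ref{thm:bipartite} to get a one-sided Hamiltonian cycle on this bipartite structure; and (iv) contract or project back so that the cycle restricts to a one-sided Hamiltonian structure on a planar supergraph of the original $G$. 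The degree bound $3$ should be exactly what guarantees the incidence structure stays planar and that the back-edge/rotation conditions of one-sidedness can be met.

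The step I expect to be hardest is step (iv) together with the $2$-cycle issue: when one passes to the bipartite auxiliary graph and back, two parallel edges (a separating $2$-cycle) can arise, and one must verify that the one-sided Hamiltonian cycle can be routed so that it does not get trapped inside such a $2$-cycle and that the resulting supergraph is a \emph{simple} planar graph containing $G$ as a spanning subgraph. Establishing that these bigons can always be separated or avoided — presumably by a local rerouting argument using the subcubic degree constraint — is, I believe, the technical heart of the theorem, and it is precisely the point the authors flag in their acknowledgements. Once that is handled, the embedding follows immediately from Theorem~\ref{thm:embedding}.
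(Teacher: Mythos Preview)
Your strategic instinct---reduce to the bipartite theorem and expect separating $2$-cycles to be the technical bottleneck---matches the paper exactly. But the operational mechanism you propose is the wrong one, and in fact runs in the opposite direction from what works.

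None of your three candidate constructions (subdividing edges, adding face-vertices, passing to a radial or vertex--face incidence graph) yields a graph that contains $G$ as a \emph{spanning subgraph on the same vertex set}; in each case $G$ only survives as a minor, so the closure of POSH under subgraphs does not apply. Your step~(iv), ``contract or project back so that the cycle restricts to a one-sided Hamiltonian structure on a planar supergraph of the original $G$'', is precisely where this bites: there is no evident way to pull a one-sided Hamiltonian cycle of the auxiliary bipartite graph back to one on $V(G)$.

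The paper's key idea is to go the other way. A max-cut argument shows that any subcubic $G$ has a matching $M$ whose \emph{contraction} yields a bipartite (multi-)graph $B$; one then chooses the plane embedding of $G$ and the matching $M$ carefully so that $B$ has no \emph{separating} $2$-cycles (this is exactly the difficulty flagged in the acknowledgements, and is handled by iteratively re-embedding to kill separating triangles and re-choosing $M$ to avoid covering separating $4$-cycles). Now apply Theorem~\ref{thm:bipartite} to the underlying simple graph of $B$ to obtain an alternating $2$-page book embedding, and finally \emph{un}-contract: each degree-$4$ vertex of $B$ is split back into its matching edge $uw$ directly in the drawing, via a case analysis (``local'' versus ``far'' splits, plus a few degenerate ``double'' splits when multi-edges are present). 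The absence of separating $2$-cycles is what guarantees these splits can be carried out without creating crossings. Adding the spine edges to the resulting drawing of $G$ then exhibits the one-sided Hamiltonian supergraph.

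So the missing ingredient in your plan is specifically the contract--embed--split pipeline; once you have that, your identification of the $2$-cycle issue as the heart of the matter is exactly right.
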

 	\begin{remark}
 		Note that we do \emph{not} claim the theorem for all {\em plane} subcubic graphs. However, we are not aware of any connected subcubic plane graph, which is not POSH.
    \end{remark}
 	
 	To prove this, we use Theorem \ref{thm:bipartite} and the following lemmas:
 	
\begin{lemma}\label{lem:cube-cont}
	Let $G$ be a subcubic graph. Then $G$ admits a matching $M$
	such that contracting all the edges of $M$ results
	in a bipartite multi-graph.
\end{lemma}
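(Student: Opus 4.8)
The plan is to reformulate the existence of the matching as a vertex-colouring statement and then produce the colouring by a short extremal argument that uses the degree bound in exactly one place.

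The key observation is the following equivalence. Suppose we 2-colour $V(G)$ with colours $\{0,1\}$ and let $M$ be the set of \emph{monochromatic} edges, i.e. those whose endpoints receive the same colour. If $M$ happens to be a matching, then contracting $M$ yields a bipartite multigraph: the colour of each contracted pair is well defined because its two endpoints share a colour, and every surviving edge is bichromatic. Such a bichromatic edge cannot collapse to a loop, since its endpoints would then form a contracted pair and hence be monochromatic, a contradiction; so each surviving edge joins two differently-coloured contracted vertices, and the colouring descends to a proper 2-colouring of $G/M$. Thus it suffices to find a 2-colouring of $G$ in which every vertex is incident to at most one monochromatic edge, equivalently a colouring whose two colour classes each induce a subgraph of maximum degree at most $1$.

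The main step is to produce such a colouring by a local-switching argument. Among all partitions $(V_0,V_1)$ of $V(G)$, I would fix one that minimises the number of monochromatic edges $e(V_0)+e(V_1)$, and show that in this partition no vertex has two same-coloured neighbours. Indeed, if some $v\in V_0$ had at least two neighbours in $V_0$, then since $\deg(v)\le 3$ it would have at most one neighbour in $V_1$; moving $v$ to $V_1$ destroys at least two monochromatic edges and creates at most one, strictly decreasing $e(V_0)+e(V_1)$ and contradicting minimality. The symmetric statement rules out a vertex of $V_1$ with two neighbours in $V_1$. Hence in the optimal partition the monochromatic edges form a matching $M$, and by the previous paragraph $G/M$ is bipartite, proving the lemma.

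This is essentially the instance $k=2$, $d_1=d_2=1$ of Lov\'asz's theorem on partitioning a graph of maximum degree $\Delta=d_1+\dots+d_k+(k-1)$ into parts of bounded induced degree, specialised to $\Delta\le 3$. The only place the subcubic hypothesis is used is the inequality ``at least two monochromatic edges lost versus at most one gained'' in the switching step, so I expect that to be the crux; everything else is bookkeeping, and no planarity is required.
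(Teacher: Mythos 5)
Your proposal is correct and is essentially the paper's proof: minimising the number of monochromatic edges is the same as maximising the cut $(X,Y)$, and your switching step (a vertex with two same-coloured neighbours has at most one neighbour of the other colour, so moving it strictly improves the partition) is exactly the argument given there. The only cosmetic difference is that you phrase the conclusion via a descending $2$-colouring of $G/M$ while the paper notes that contraction turns the two induced matchings into independent sets.
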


\begin{proof}
  Let $(X,Y)$ be a partition the vertex-set of $G$ such that the size of the
  cut, i.e., the number of edges in $G$ with one endpoint in $X$ and one
  endpoint in $Y$, is maximized. We claim that the induced subgraphs $G[X]$
  and $G[Y]$ of $G$ are matchings.  Suppose that a vertex $v \in X$ has at
  least two neighbors in $G[X]$. Then $v$ has at most one neighbor in $Y$, and
  hence moving $v$ from $X$ to $Y$ increases the size of the cut by at least
  one, a contradiction. The same argument works for $G[Y]$.
	
	Let $M$ be the matching in $G$ consisting of all the edges in $G[X]$ and
	$G[Y]$. Contracting the edges in $M$ transforms $G[X]$ and $G[Y]$ into
	independent sets, and hence results in a bipartite multi-graph $G/M$.
\end{proof}

A \emph{separating $k$-cycle} of a plane graph $D$ is a simple cycle of length
$k$, i.e., $k$ edges, such that there are vertices of $D$ inside the cycle.

\begin{lemma}\label{lem:cube-good}
  Let $G$ be a subcubic planar graph. Then $G$ admits a plane
  embedding $D_G$ and a matching $M$ such that contracting
  all the edges of $M$ in $D_G$ results in a bipartite multi-graph
  without separating 2-cycles.
\end{lemma}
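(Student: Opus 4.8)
The plan is to start from the matching supplied by Lemma~\ref{lem:cube-cont} and to invest all remaining work into the choice of embedding and into killing separating $2$-cycles. Concretely, apply Lemma~\ref{lem:cube-cont} to obtain a matching $M$ of $G$ for which $G/M$ is bipartite, and fix for the moment an arbitrary plane embedding of $G$. Bipartiteness is then settled once and for all: it is a property of the abstract contraction $G/M$ and is insensitive to the drawing. The entire remaining difficulty is to choose the embedding $D_G$ (and, if forced, to adjust $M$ locally) so that the induced drawing of $G/M$ carries no separating $2$-cycle.

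First I would analyze where $2$-cycles of $G/M$ come from. A double edge of $G/M$ can be created only by a contracted matching edge: if $e=\{u,v\}\in M$ is contracted to a vertex $a$, then two parallel edges appear between $a$ and a vertex $b$ exactly when the preimage of $b$ is joined to both $u$ and $v$ in $G$. Hence every $2$-cycle of $G/M$ is obtained by contracting a triangle that contains (its unique) matching edge, or a $4$-cycle two of whose opposite edges are matching edges. In either case the two parallel edges form a $2$-cut $\{a,b\}$ of $G/M$ whose preimage is a cut of $G$ on at most four vertices. Because $G$ is subcubic, each such cut vertex sends at most one further edge into the region enclosed by the cycle, so the enclosed part $I$ of any separating $2$-cycle is attached to the rest of $G$ through at most four edges. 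This bounded interface is what makes re-embedding feasible.

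Given this structure, the strategy is to remove separating $2$-cycles one at a time by re-embedding. Pick an \emph{innermost} separating $2$-cycle, bounding a bigon $B$ with nonempty interior $I$. Since $\{a,b\}$ is a $2$-cut isolating $I$, and the face $F$ lying immediately outside the outer boundary edge of $B$ is incident to both $a$ and $b$ (they are the endpoints of that edge), one can redraw the whole piece $I$ together with its at most four interface edges inside $F$. This turns $B$ into an empty, hence non-separating, bigon without introducing a crossing. To make the move a legitimate re-embedding of $G$ itself, it is carried out on the preimage triangle or $4$-cycle in $D_G$, flipping $I$ from the inside to the outside of that small cycle. I would control the process by the potential given by the number of vertices that lie inside some separating $2$-cycle and argue that each flip strictly decreases it; processing innermost cycles first guarantees that the relocated piece $I$ reintroduces no separating $2$-cycle.

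The hard part will be the verification accompanying the flip: showing that relocating $I$ neither destroys bipartiteness of $G/M$ nor creates a \emph{new} separating $2$-cycle, and handling the case in which the receiving face $F$ is itself a bigon, so that the naive flip is obstructed by \emph{nested} $2$-cycles. Bipartiteness is the easy direction, since the flip is a pure embedding change and leaves the abstract graph $G/M$ untouched; the genuine obstacle is the geometric bookkeeping at the cut vertices. When the flip is blocked because all free edges at the cut vertices point the wrong way, I expect one must locally re-select which edge of the enclosing triangle or $4$-cycle is placed into $M$, restoring an unobstructed flip while keeping the two color classes of $G/M$ correctly identified. Managing this local exchange of matching edges in concert with the nesting of $2$-cycles is precisely the delicate point, matching the difficulty over separating $2$-cycles acknowledged in the introduction.
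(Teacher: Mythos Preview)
Your structural analysis is right: the $2$-cycles of $G/M$ arise from triangles of $G$ carrying one matching edge or from $4$-cycles carrying two opposite matching edges. The flip idea is also present in the paper. But the paper organises the argument differently, and your proposal has a genuine gap at exactly the point you flag as ``hard''.

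First, the matching swap you propose does not work. You suggest ``locally re-selecting which edge of the enclosing triangle or $4$-cycle is placed into $M$''. For a triangle $abc$ with $ab\in M$, replacing $ab$ by $ac$ need not keep $G/M$ bipartite: a $5$-cycle through $a$ and $b$ that avoids $c$ becomes a $4$-cycle under the first contraction but stays odd under the second. The same failure occurs for a $4$-cycle $abcd$ when you trade $\{ab,cd\}$ for $\{bc,da\}$. The paper never swaps inside the short cycle; for the $4$-cycle case it removes the two matching edges on the cycle and inserts instead the unique third edges $e_I,e_O$ at two of its vertices (one going inside, one outside), and then argues that the four facial cycles at the quadrilateral each still meet $M$ in exactly one edge, so all face lengths after contraction are unchanged and bipartiteness survives. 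This is a different, and necessary, move.

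Second, the paper avoids your termination worries by an extremal rather than iterative argument. It first chooses the embedding to minimise the number of separating $3$-cycles (and shows this forces zero of them, using that a degree-$\le 3$ vertex on a separating triangle with an edge inside has none outside, so one triangle edge can be rerouted); among such embeddings it minimises separating $4$-cycles. Only then is $M$ chosen, of minimum size and secondarily minimising the number of separating $4$-cycles with two matching edges. With these extremal choices one proves directly that no separating $4$-cycle is ``covered'', so no separating $2$-cycle survives---no innermost processing, no potential function, no nested-bigon case is needed. Your plan would have to confront all of these; the paper's setup sidesteps them.
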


\begin{proof} Let $G$ be a subcubic planar graph. Without loss of generality
$G$ is connected, otherwise we just deal with the components first, 
then embed $G$ in a way that all components are incident to the outer face.

Note that a 2-cycle can only arise by contracting one matching edge of a
triangle or two matching edges of a quadrilateral.
  Consider an embedding $D$ of $G$ which minimizes
  the number of separating 3-cycles and among those
  minimizes the number of separating 4-cycles. 

 \begin{Claim}
     $D$ has no separating 3-cycle.
 \end{Claim}

  \begin{proof}
      For illustration, see Figure \ref{fig:sepTri}. 
  We will first show $D$ has no \emph{separating diamond}, that is, two triangles sharing an edge $e=uv$, at least one of which is a separating 3-cycle.
  Otherwise place $u$ very closely to $v$.
  Now $e$ is short and we reroute the other two edges of $u$ such that they stay close to 
  the corresponding edge of $v$. 
  Since one of the triangles containing $e$ was assumed to be separating the 
  new drawing has fewer separating 3-cycles,
  a contradiction.

\begin{figure}[htb]
    \centering
    \includegraphics{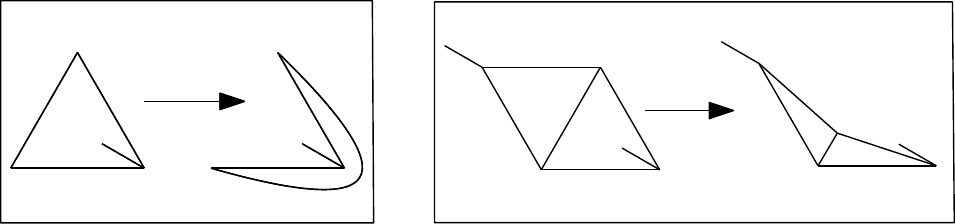}
    \caption{ Procedure to eliminate triangles with an inner vertex. The procedure on the left eliminates isolated separating triangles, while the one on the right deals with separating diamonds.}
    \label{fig:sepTri}
\end{figure}

  We are ready to show $D$ has no separating 3-cycle. 
  If $T$ is a separating 3-cycle some edge has to go from a vertex $v$ of $T$ into its interior. 
  Since $v$ has degree at most 3 it has no edge to the outside of $T$.
  We can then redraw the edge $e$ of $T$ not incident to $v$ 
  outside of $T$ closely to its two other edges. 
  Again the new drawing has fewer separating 3-cycles: indeed, if the redrawn edge would 
  be part of another 3-cycle, $T$ is part of a separating diamond.
  \end{proof}

%

Now choose an edge set $M$ of minimum cardinality, such that contracting it yields a bipartite multi-graph. The proof of Lemma \ref{lem:cube-cont} implies that $M$ is a matching. Among those matchings, we choose $M$ such that the number of separating 4-cycles
which have 2 edges in $M$ is minimized. Such separating 4-cycles are said to be \emph{covered} by $M$.

\begin{Claim}
     $M$ covers no separating 4-cycle.
\end{Claim}
\begin{proof}
Suppose $Q=v_1v_2v_3v_4$ is a separating 4-cycle
such that $v_1v_2$ and $v_3v_4 \in M$ and $v_1$ has an edge $e_I$ to the inside, thus no edge to the outside.

\begin{figure}[htb]
    \centering
    \includegraphics{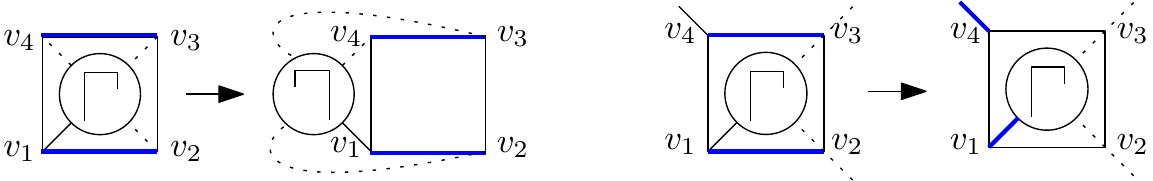}
    \caption{ Procedure to eliminate quadrilaterals with an inner vertex. The redrawing (left) cannot be applied in the right case, where we are changing the blue matching to avoid a separating 2-cycle.}
    \label{fig:sepQua}
\end{figure}

If $v_4$ has no edge to the outside either, we change $D$ to a drawing $D'$ 
by redrawing the part $\Gamma$ of $D$ inside $Q$ outside of it reflected 
across $v_1v_4$, see Figure \ref{fig:sepQua}. 
In $D'$ the original separating 4-cycle is no longer separating. 
We claim that no new separating 3-cycle or 4-cycle that is covered by $M$ was created.  
The claim contradicts the choice of $D$ or $M$.

To prove the claim note that $S=\{v_2,v_3\}$ is a 2-separator, unless $Q$ is the outer face of $D$,
so let's assume first that it is not.
Thus a separating 3- or 4-cycle has to live on one side of $S$, 
since the shortest path between them in $Q\cup \Gamma$ except their edge is of length 3
except if both $v_2$ and $v_3$ are adjacent to the same vertex of $\Gamma$,
in which case $Q$ is the outer face, a contradiction.
Let $X$ be the component of $G\setminus S$ containing $\Gamma$.
Then the number of vertices inside 3- or 4-cycles that are not part of $X$ is unchanged in $D'$,
since the face $X$ is located in is still the same.
The only 3- or 4-cycles in $X\cup S$ that were not reflected in their entirety are the ones containing the edge $v_2v_3$.
Since $Q$ is assumed not to be the outer face, at least one of $v_2$ and $v_3$ is not connected to $\Gamma$.
Thus such a cycle $C$ is a 4-cycle consisting of $v_2,v_3$, one of $v_1$ or $v_4$ 
as well as a common neighbour of $v_2$ and $v_4$ or $v_1$ and $v_3$ in $\Gamma$. 
However $v_1v_2$ or $v_3v_4$ respectively would be the only edge in $M\cap C$.
This is a contradiction to the fact that contracting $M$ yields a bipartite graph.

Now if $Q$ is the outer face of $D$, it is still true that the only cycles not reflected in their entirety contain $v_2v_3$.
However $v_2$ and $v_3$ could both be adjacent to a vertex in $\Gamma$, 
either a common neighbour for a 3-cycle or two adjacent neighbours for a 4-cycle.
Since $v_2$ and $v_3$ are already covered by $M$, this 3-cycle would contain no edge in $M$, 
whereas the 4-cycle would contain at most one.
Therefore both of these contradict the definition of $M$.

Therefore, we know that $v_4$ has an edge $e_O$ to the outside. 
This edge does not go to any vertex of the quadrilateral, because the only candidate left would be $v_2$,
but this would yield that one of the triangles $v_2v_3v_4$ and $v_1v_2v_4$ is separating.


Change the matching $M$ to an edge set $M'$ by
removing $v_1v_2$ and $v_3v_4$ from it and adding $e_O$
and $e_I$. Contracting $M'$ still results in a bipartite
graph, because the same four facial cycles that contained our previous edges
contain exactly one new edge each as well, so their size after
contraction does not change. Thus $M'$ is a matching, because it has the same cardinality as $M$ and is 
therefore minimal as well. 
We conclude $M'$ does not cover $v_2$ or $v_3$, 
because $M$ did not contain any other edge than $v_1v_2$ and $v_3v_4$ at them either.
Since $M'$ does not contain two edges from
quadrilateral $v_1,\ldots,v_4$ but $M$ is minimal, there has to be a
separating quadrilateral, of which $M'$ contains two edges, but $M$ doesn't. 
If such a separating quadrilateral $Q$ contains $e_I$, 
then it has to contain another edge incident to $v_1$. It cannot contain $v_1v_2$, 
because we know $v_2$ is not covered by $M'$. Therefore it contains $v_1v_4$ and consequently $e_O$.
The same argumentation works to show that if it contains $e_O$, then it also contains $e_I$. 
This is a contradiction to the existence of $M'$ 
because the endpoints of $e_O$ and $e_I$ are on the outside and the inside of the quadrilateral respectively
and therefore non-adjacent.
\end{proof}

So we proved that our choice of $M$ makes sure that no
separating 2-cycles will be present in the contracted plane bipartite multi-graph.
\end{proof}

  \begin{remark}
  The embedding $D$ and the matching $M$ can be constructed starting from an arbitrary
  embedding and matching by iterative application of the operations used in the proof.
  \end{remark} 
  
\begin{proof}[Proof of Theorem \ref{thm:cubic}]
Now let~$B$ be the plane bipartite multi-graph
obtained from $G$ by contracting the edges in~$M$ without changing the embedding any further.
Let $B'$ be the underlying
simple graph of $B$ and let $Q$ be a quadrangulation or a star which has $B'$ as a
spanning subgraph. The proof of Theorem~\ref{thm:bipartite} shows that there
is a left to right placement $v_1,\ldots,v_s$ of the vertices of $Q$ on the
$x$-axis such that for each $i\in[s]$ all the edges $v_jv_i$ with $j < i-1$
are in one half-plane and all edges $v_iv_j$ with $j > i+1$ are in the other
half-plane. Delete all the edges from~$Q$ which do not belong to $B'$, and
duplicate the multi-edges of $B$ in the drawing. This yields a 2-page book embedding
$\Gamma$ of $B$.

\begin{figure}[htb]
    \centering
    \includegraphics{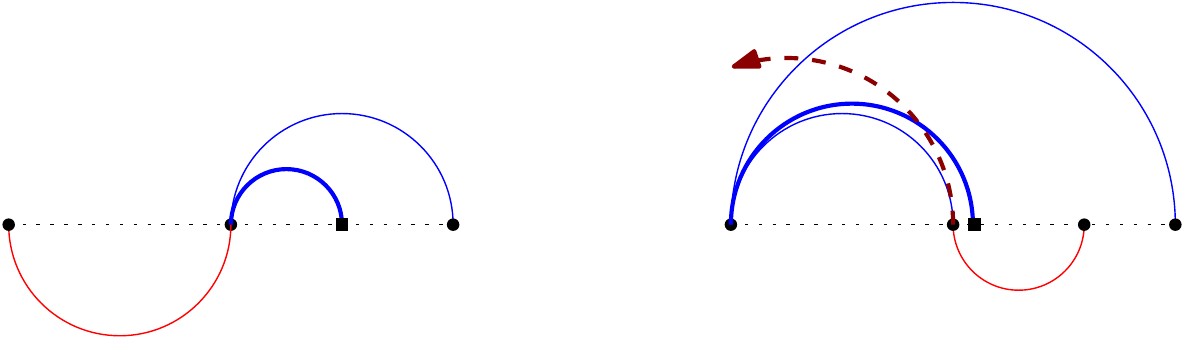}
    \caption{ How to add leaves: 
                The leaf is plotted as a square, its new adjacent edge fat.}
    \label{fig:Leaves}
\end{figure}

Let $v$ be a contracted vertex of $B$. Vertex $v$ was obtained by
contracting an edge $uw \in M$. If $u$ and/or $w$ did not have degree 3, we add 
edges at the appropriate places into the embedding that end in leaves, see Figure~\ref{fig:Leaves}. 
To add an edge to $u$ for instance, choose a face $f$ incident to $u$ that is not contracted into a 2-cycle.
Let $e$ and $e'$ be the two edges incident to both $v$ and $f$.
If the angle between $e$ and $e'$ contains part of the spine (the \hbox{$x$-axis}), 
we put the leaf on the spine close to $v$ connected to $v$ with a short edge below or above the spine, 
in a way to accomodate the local vertex condition of $v$.
If it doesn't, assume without loss of generality it is in the upper half-plane and
that edge $e$ is the edge closer to the spine.
This edge is unique because both edges at $v$ delimiting $f$ go upwards and 
therefore both to the same side, say right of $v$.
Route the new edge closely along $e$ then put the leaf just next to the other endpoint $x$ of $e$. 
Edges that would cross this new edge cannot cross $e$, 
thus the only possibility are edges incident to $x$ that emanate into the upper halfspace. 
However those edges have to go to the left of $x$ by its local vertex condition.
These edges do not exist, as any such edge would have to cross $e'$, 
see the dashed line in Figure Figure~\ref{fig:Leaves}. Thus the new edge is uncrossed.
This procedure will be done to every vertex first. 
Note that the resulting graph stays bipartite and the local vertex conditions are still fulfilled, 
but now every contracted vertex has degree 4. This makes the case distinction of splitting the vertices easier.

We now show how to undo the contractions, i.e.,
\emph{split} vertices, in the drawing $\Gamma$ in such a way
that at the end we arrive at a one-sided 2-page book drawing~$\Gamma^\star$ of $G$, 
that is, a $2$-book embedding of $G$ with vertex-sequence $v_1,\ldots,v_n$ such
that for every $j \in \{1,\ldots,n\}$ the incident back-edges $v_iv_j$ with $1 \le i<j$
are all drawn either on the spine or on the same page of the book embedding (all above
or all below the spine). Once we have obtained such a book embedding, we can delete the artificial added leaves,
then add the spine edges (including the back edge from the rightmost to the leftmost vertex)
to $G$ to obtain a supergraph $G^+$ of $G$ which has a
one-sided Hamiltonian cycle, showing that $G$ is POSH.

Before we advance to show how we split a single vertex $v$ of degree four into an edge $uw\in M$,
we first want to give an overview of the order in which the different splits,
the \emph{far splits} and \emph{local splits} are applied.
We will then describe what these different splits actually mean.
To split all the degree four vertices we proceed as follows:

First we split all vertices which are subject to a far split, 
from the outside inwards. More precisely, define a partially ordered set on the edges
incident\footnote{ There will be a clarification later as to what this means exactly.} 
to vertices subject to a far split in the following way:
Every edge $e$ defines a region $R_e$ which is enclosed by $e$ and the spine.
Now order the edges by the containment order of regions $R_e$.
From this poset, choose a maximum edge and then a vertex that needs a far split incident to that edge.
When no further far split is possible we do all the local splits.
These splits are purely local, so they cannot conflict with each other.
Therefore their order can be chosen arbitrarily.

We label the edges of $v$ in clockwise order as
$e_1,e_2,e_3,e_4$ such that in~$G$ the edges $e_1,e_2$ are incident to $u$ and
$e_3,e_4$ are incident to $w$. If the two angles $\angle e_2e_3$ and
$\angle e_4e_1$ together take part of both half-planes defined by the
spine, then it is possible to select two points left and right of the
point representing $v$ in $\Gamma$ and to slightly detour the edges $e_i$ such
that no crossings are introduced and one of the two points is incident to
$e_1,e_2$ and the other to $e_3,e_4$. The addition of an edge connecting the two
points completes the split of $v$ into the edge $uw \in
M$. Figure~\ref{fig:uncontract1} shows a few instances of this \emph{local}
split.

   \calc_figscale{22}
    \begin{figure}[htb]
    \centerline{\input{\path/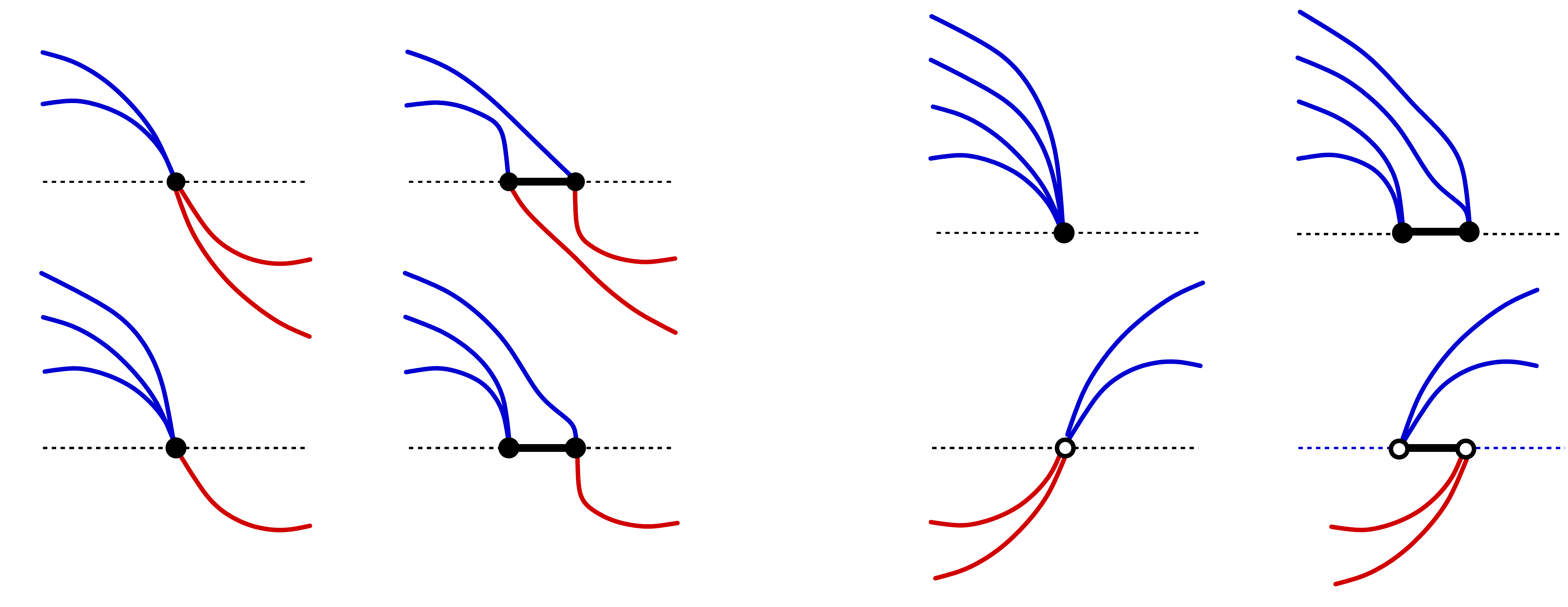tex_t}}
    \caption{\label{fig:uncontract1}}
    \end{figure}
    VC
{ Four cases for the local split of a vertex $v$.}

The above condition about the two angles is not fulfilled if and only if 
all four edges of
$v$ emanate into the same halfspace, say the upper one,
and the clockwise numbering starting at the $x$-axis is either $e_4,e_1,e_2,e_3$ or $e_2,e_3,e_4,e_1$. 
The two cases are the same up to exchanging the names of $u$ and $w$, 
therefore we can assume the first one.
A more important distinction is whether most $e_i$ end to the left or right of $v$.
Note that in the ordering given by $\Gamma$, 
all $e_i$ go to the same side, since they are all in the same halfplane.
However, if $v$ is not the first vertex we are splitting, it may happen, 
that a single edge on the spine to the other side exists, see Figure~\ref{fig:uncontract3+dotted}.
For all $i\in[4]$ let $v_i$ be the other endpoint of $e_i$ than $v$. While it can happen
that some of the $v_i$ coincide due to multi-edges, we will first discuss the case that they don't.
In the left case we put $u$ slightly left of $v_1$ while in the right case $u$ is
put slightly right of $v_2$, connecting $u$ to this close vertex by a spine edge. 
In both cases we leave $w$ at the former
position of $v$.
Figure~\ref{fig:uncontract3+dotted} shows the right case and Figure~\ref{fig:uncontract2+dotted} the left.

\begin{figure}[htb]
    \centering
    \includegraphics{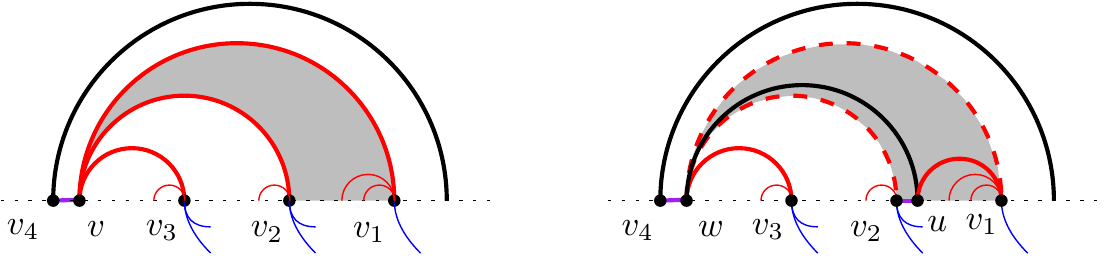}
    \caption{ Far split with $v_i$ to the right except for the spine edge neighbor.}
    \label{fig:uncontract3+dotted}
\end{figure}

To see that in the left case edges $uv_2$ and $uw$
are completely free of crossings, observe that we can route them close to the path $v_2vv_1$ and the
edge $v_1v$ respectively in the original drawing (dashed in Figure~\ref{fig:uncontract2+dotted}). 
It is important to note here, that due to the order in which we chose to do the splits, 
$v_1$ and $v_2$ are still original vertices of $B$, that is, they have not been split in the upper half-plane and thus 
still don't have two edges emanating to the upper half-plane to both sides.
Therefore, similarly to the argumentation for adding leaves, 
no edge incident to $v_1$ crosses $uw$ or $uv_2$.
The right case is analogous, just exchange the roles of $v_1$ and $v_2$.

This kind of split is a
\emph{far} split. For the purposes of incidence in the poset structure mentioned above,
vertices are not only considered incident to any edge they are an endpoint of, 
but the spine neighbour of $u$ ($v_1$ or $v_2$)
is also considered to be incident to the edge $uw$.
For illustration, consider the outermost black edge in Figure~\ref{fig:uncontract3+dotted} (left), it is considered incident to $v$.


%
   \calc_figscale{32}
    \begin{figure}[htb]
    \centerline{\input{\path/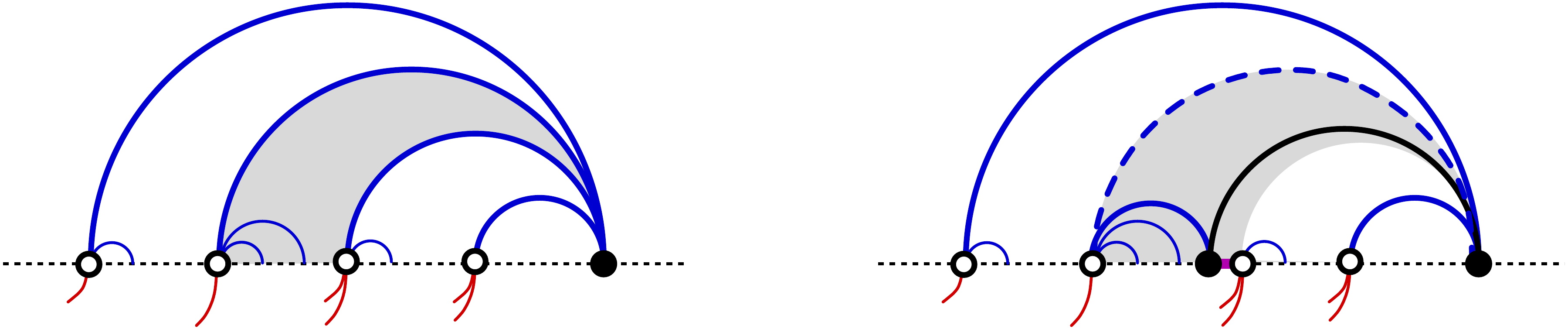tex_t}}
    \caption{\label{fig:uncontract2+dotted}}
    \end{figure}
    VC
{ Far split within the gray region with $v_i$ to the left in the upper half-plane.}
%

In the following we describe how the different kinds of splits are affected by the presence of multi-edges.
The first thing to note is that local splits can be done in the same way, since we did not mention the end vertices at all.

\begin{figure}[htb]
    \centering
    \includegraphics[scale=.95]{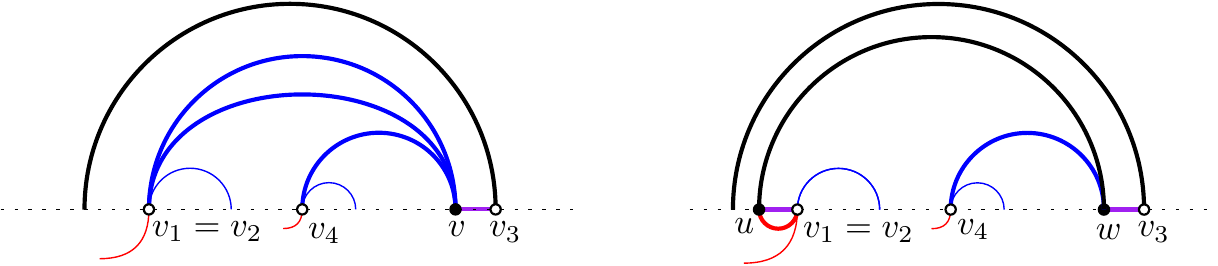}
    \caption{ If $v_1=v_2$, a double spine edge is created. Here $e_3=vv_3$ is a spine edge.}
    \label{fig:v1=v2}
\end{figure}

Concerning the far splits, firstly we talk about the case that exactly two edges go from one vertex to another:
As depicted in Figures ~\ref{fig:uncontract3+dotted} and~\ref{fig:uncontract2+dotted} the case
$v_2=v_3$ and/or $v_4=v_1$ is unproblematic, in this case we keep the dashed line(s) in the drawing. 
Double-edges are consecutive because non-consecutive double-edges are separating 2-cycles, which we avoided in the construction. 
Thus the last case of a double-edge to consider is $v_1=v_2$.
In this case, we follow the same strategy of placement of $u$ and $w$, but this results in a double-edge on the spine between $u$ and $v_1=v_2$, see Figure~\ref{fig:v1=v2}. 
As in later local splits, we might be interested what half-space the angle between the two spine edges is part of, 
we interpret one of these edges as a spine edge and the other as an edge which is above or below the spine 
depending on the right vertex of the two. 
This might be $u$ or $v_1$, depending on whether we are in the left or right case. 
It is important for the one-sidedness condition to choose this direction so that 
all left neighbours of the right vertex of the two are reached by edges emanating into the same halfspace and/or spine edges.

\begin{figure}[htb]
    \centering
    \includegraphics[scale=.9]{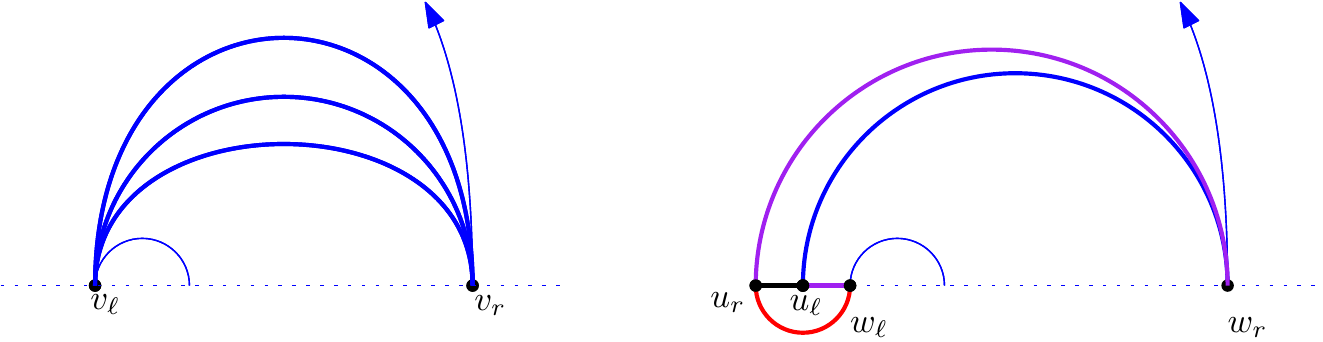}
    \caption{ Doing a double split means splitting two vertices simultaneously. }
    \label{fig:2Split}
\end{figure}

Secondly, if there are three edges between a left vertex $v_\ell$ and a right vertex $v_r$,
say in the upper half-plane, we will split both simultaneously, for illustration, see Figure~\ref{fig:2Split}. 
Since three edges go between these two vertices, there is just one more edge $e$ left for $v_\ell$.
Therefore we can find a place on the spine just to the right or to the left of $v_\ell$ which is free, because the edge $e$ is on the other side.
Now we split $v_\ell$ into $u_\ell$ and $w_\ell$ and $v_r$ into $u_r$ and $w_r$ simultaneously 
where $w_\ell$ and $w_r$ are the vertices with the edge that goes somewhere else on both sides.
From left to right we put $u_r$ then $u_\ell$ just left of the position of $v_l$, which is the new position of $w_\ell$.
The three of them are connected by spine edges, just $u_r$ and $w_\ell$ have an edge in the lower half-plane. 
These edges are not crossed, because the vertices are close enough together.
Finally we put $w_r$ at the position of $v_r$ and add edges to $w_r$ and $w_\ell$ in the upper half-plane.
These edges are not crossed, because any edge crossing them would have crossed the triple edge in the original drawing.

This kind of split is a \emph{double} split.
These splits are purely local, so they can be performed together with the local splits in the end.

The last case is that all four edges of a given vertex go to the same vertex, this is a full connected component of the bipartite graph, because it has maximum degree 4. 
This component goes back to a $K_4$ component in the cubic graph that had two independent edges contracted. A one-sided Hamiltonian cycle of $K_4$ is illustrated in Figure~\ref{fig:one-sided2}. 
We apply another local double split which consists of replacing the 4 parallel edges by this drawing, embedded close to the place of one of the original vertices.



This completes the proof of Theorem~\ref{thm:cubic}.
\end{proof}

\section{2-Trees}\label{sec:2trees}

From the positive results in Sections \ref{sec:bipartite} and \ref{sec:cubic} one might expect
that ``sufficiently sparse'' planar graphs are POSH. This
section shows that 2-trees are not.

A \emph{$2$-tree} is a graph which can be obtained, starting from a $K_3$, by
repeatedly selecting an edge of the current graph and adding a new vertex
which is made adjacent to the endpoints of that edge. We refer to this operation as
\emph{stacking} a vertex over an edge.

From the recursive construction it follows that a $2$-tree on $n$ vertices
is a planar graph with $2n-3$ edges. We also mention that $2$-trees
are series-parallel planar graphs. Another well studied class which contains
$2$-trees as a subclass is the class of (planar) Laman graphs.

Fulek and Tóth have shown that planar 3-trees 
admit $n$-universal point sets of size $O(n^{3/2}\log n)$. Since every 
$2$-tree is an induced subgraph of a planar 3-tree the bound carries over
to this class. 

\begin{theorem}
	There is a 2-tree $G$ on 499 vertices that is not POSH.
\end{theorem}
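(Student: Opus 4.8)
The claim is that some 2-tree on 499 vertices is not POSH. Being POSH means being a spanning subgraph of a plane graph with a one-sided Hamiltonian cycle. The plan is to exhibit an explicit 2-tree $G$ and prove that \emph{no} way of augmenting $G$ to a supergraph $G'$ on the same vertex set yields a plane embedding with a one-sided Hamiltonian cycle. The natural obstruction to target is the structure of one-sided Hamiltonian cycles: recall that in such a cycle each vertex $v_j$ (except $v_1,v_n$) has all its back-edges on a single side of the cycle, so the back-edge endpoints inside and outside interleave in a very constrained, nested way. Sparse but highly nested 2-trees should be able to force a contradiction with this nesting.

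**Candidate construction.** I would build $G$ by iterated stacking designed to create many nested separating triangles that cannot be simultaneously oriented consistently as ``inside'' or ``outside'' along any Hamiltonian cycle. Concretely, start from a triangle and recursively stack a vertex into each face, producing a deep Apollonian-type 2-tree; the exact number $499$ presumably comes from the branching factor times the depth needed to force the contradiction. The key combinatorial fact to leverage is a \emph{local parity or counting invariant}: along any Hamiltonian cycle $v_1,\dots,v_n$, the assignment of each vertex to $V_I$ or $V_O$ must be compatible with the planar structure of all the stacked triangles. Each stacked vertex $v_j$ of degree $2$ in $G$ lies on exactly one triangle of its parents; the one-sided condition forces its two tree-edges to be consecutive in its rotation, which pins down on which side its (potential) back-edges lie. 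The hard constraint is that a vertex stacked into a triangle must be enumerated between its two parents in the cyclic order, and this ordering propagates up the stacking tree, eventually over-constraining the linear order $v_1,\dots,v_n$.

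**The main argument.** I would argue by contradiction: suppose $G'\supseteq G$ has a plane embedding with one-sided Hamiltonian cycle $v_1,\dots,v_n$. Since POSH is closed under subgraphs (the Proposition in the excerpt), I may freely restrict attention to the edges of $G$ itself when deriving constraints, because the one-sided condition for $G'$ restricts the positions of $G$'s edges too. For each stacking triangle I would extract an inequality on the indices of its three vertices in the Hamiltonian order, using that a vertex stacked over edge $xy$ must have its index either between those of $x,y$ or outside, with the side determined by $V_I/V_O$ membership. Summing or composing these index constraints over the whole recursive tree, I expect a pigeonhole or parity contradiction: there are more nesting constraints than the linear order can satisfy once the tree is deep and wide enough, giving the threshold of $499$ vertices. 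The technically delicate point is that the supergraph $G'$ may add arbitrary edges and its embedding is unknown, so the constraints must be derived purely from $G$'s rigid triangulated skeleton and the definition of one-sidedness, not from any assumed embedding of $G'$.

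**The main obstacle.** The hard part will be ruling out \emph{all} supergraphs $G'$ and \emph{all} embeddings at once, rather than a single fixed drawing. Edge additions can reroute back-edges and change which side a vertex's neighbors lie on, so the proof cannot fix an embedding of $G$ in advance. I expect the cleanest route is to isolate an invariant of the Hamiltonian cycle structure that is forced by $G$'s stacking skeleton alone—e.g. a feasibility condition on an induced linear/cyclic order of a carefully chosen subset of vertices—and then show by an exhaustive but bounded case analysis (small enough to be the reason $499$ is chosen rather than a smaller number) that no linear order together with any $V_I/V_O$ bipartition realizes all the nested-triangle constraints. If a slicker algebraic invariant exists, it would replace the case analysis; otherwise the bulk of the work is the bookkeeping that the recursive 2-tree forces incompatible ordering demands on its three ``outer'' vertices.
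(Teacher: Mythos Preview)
Your sketch points in the right direction—iterated stacking plus the one-sidedness constraint does eventually force a contradiction—but it is missing the concrete combinatorial core, and one of your stated heuristics is wrong.

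The assertion that ``a vertex stacked into a triangle must be enumerated between its two parents in the cyclic order'' is false. For an edge $e=v_iv_j$ with $i<j$, a vertex stacked over $e$ can land to the left of $v_i$, between $v_i$ and $v_j$, or to the right of $v_j$; nothing in the one-sided condition forces the middle option. The paper's proof is organised precisely around this trichotomy: for each edge $e$ one splits the stacked vertices as $X(e)=X_L(e)\cup X_M(e)\cup X_R(e)$ and proves sharp bounds on each piece. A two-line pigeonhole gives $|X_R(e)|\le 2$; a short case analysis shows $|X_M(e)|\le 3$ provided every child edge of $e$ carries at least three stacked vertices; and a slightly more delicate argument shows that if $|X_L(e)|\ge 2$ and $x$ is the rightmost element of $X_L(e)$, then one of the two child edges at $x$ has $|X_L|\le 1$ and both have $X_R=\emptyset$. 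There is no parity invariant and no ``summing of index constraints''—just these three counting facts.

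These bounds also dictate the construction, which is not Apollonian stacking into faces but a tailored fan-out: $7$ vertices stacked on a base edge, then $5$ on every resulting edge, then $3$ on every edge produced at that level. The numbers $7,5,3$ are exactly what is needed so that $7>2+3$ forces $|X_L(e)|\ge 2$ at the base, and then the same overflow recurs at the next level on \emph{both} child edges of the rightmost $x$, contradicting the third bound. The arithmetic $2+7+14\cdot 5+140\cdot 3=499$ is where the vertex count comes from. Without the $X_L/X_M/X_R$ bounds you cannot know what branching factors are required, and a generic Apollonian 2-tree of comparable size would not obviously trigger the same contradiction.

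Your worry about ruling out all supergraphs $G'$ is not a real obstacle: one-sidedness of $G'$ forces every vertex's back-edges \emph{within $G$} to lie on a single page, and that is all the three claims use. The genuine gap in your proposal is the left/middle/right decomposition and its bounds; once you have those, the proof is a few paragraphs.
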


\begin{proof}
Throughout the proof we assume that a $2$-tree $G$ is given
together with a left to right placement $v_1,\ldots,v_n$ of the vertices on
the $x$-axis such that adding the spine edges and the reverse edge $v_nv_1$ to
$G$ yields a plane graph with a one-sided Hamiltonian cycle.

\def\XL{X\kern-2ptL}
\def\XM{X\kern-2ptM}
\def\XR{X\kern-2ptR}

For an edge $e$ of $G$ we let $X(e)$ be the set of vertices which are stacked
over $e$ and $S(e)$ the set of edges which have been created by stacking over
$e$, i.e., each edge in $S(e)$ has one vertex of $e$ and one vertex in $X(e)$.
We partition the set $X(e)$ of an edge $e=v_iv_j$ with $i<j$ into a left part
$\XL(e) = \{ v_k \in X(e) : k < i \}$, a middle part
$\XM(e) = \{ v_k \in X(e) : i < k < j \}$, and a right part:
$\XR(e) = \{ v_k \in X(e) : j < k \}$.

\begin{Claim}\label{claim:XR}
	For every edge $|\XR(e)| \leq 2$.
\end{Claim}
Suppose that $|\XR(e)| \geq 3$.
Each vertex in this set has all its back-edges on the same side. Two of them
use the same side for the back edges to the vertices of~$e$. This
implies a crossing pair of edges, a contradiction.

\begin{Claim}\label{claim:XM}
	If for all $e' \in S(e)$ we have $|X(e')| \geq 3$, then
	$|\XM(e)| \leq 3$.
\end{Claim}

Suppose that $e=v_iv_j$ with $i<j$ is in the upper half-plane and there are
four vertices $x_1,x_2,x_3,x_4$ in $\XM(e)$. One-sidedness implies that the
four edges $x_kv_j$ are in the upper half-plane.  Thus if $x_1,x_2,x_3,x_4$ is
the left to right order, then the edges $v_ix_2$, $v_ix_3$, and $v_ix_4$ have
to be in the lower half-plane. Now let $e'=v_ix_3$ and consider the three
vertices in $X(e')$. Two of them, say $y_1, y_2$, 
are on the same side of $x_3$.
First suppose $y_1,y_2\in X(e')$ are left of $x_3$. The edges
of $v_ix_2$ and $x_2v_j$ enforce that  $y_1,y_2$ are between $x_2$ and $x_3$. Due to
edge $x_2v_j$ the edges $v_iy_1,v_iy_2$ are in the lower half-plane.
One-sidedness at $x_3$ requires that $y_1x_3$ and $y_2x_3$ are also in the lower
half-plane. This makes a crossing unavoidable.

Now suppose that $y_1,y_2\in X(e')$ are right of $x_3$. The edges
$v_ix_4$ and $x_4v_j$ enforce that  $y_1,y_2$ are between $x_3$ and $x_4$. Due to
the edge $x_3v_j$ the edges $v_iy_1$ and $v_iy_2$ are in the lower half-plane.
Now let $y_1$ be left of $y_2$.  One-sidedness at~$y_2$ requires that $x_3y_2$
is also in the lower half-plane, whence, there is a crossing between $v_iy_1$ and $x_3y_2$.
This completes the proof of the claim.

   \calc_figscale{21}
    \begin{figure}[htb]
    \centerline{\input{\path/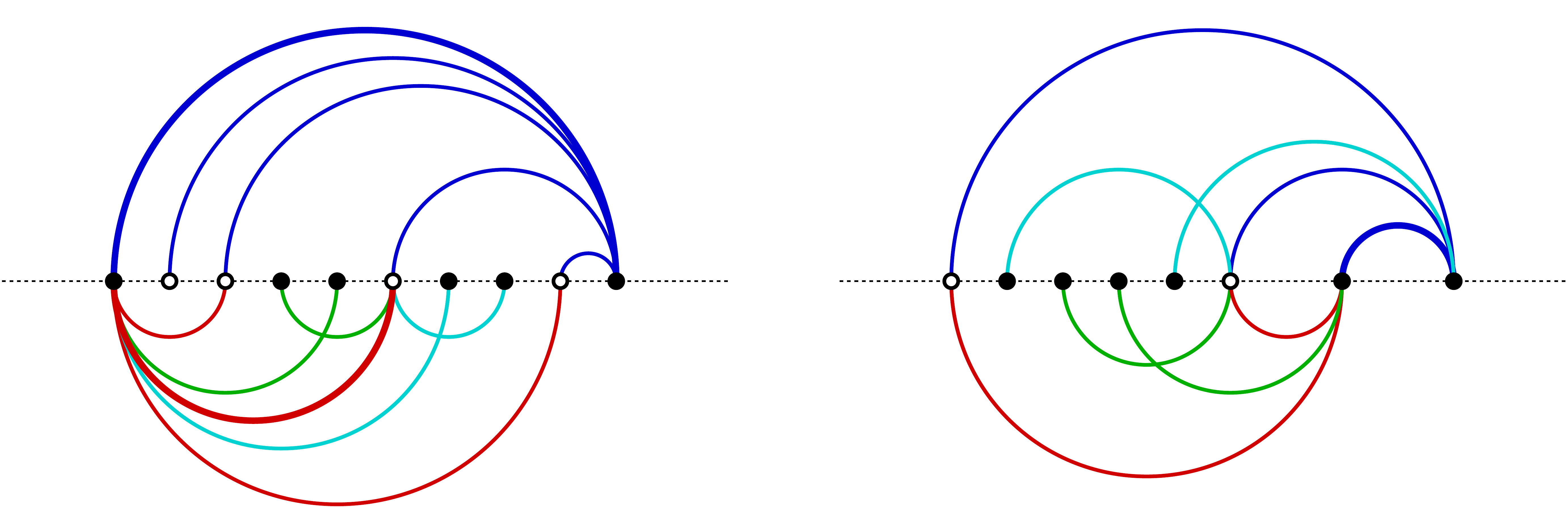tex_t}}
    \caption{\label{fig:claim23}}
    \end{figure}
    VC
{ Illustrating the proofs of the claims.}

\begin{Claim}\label{claim:XL}
	If $\XL(e)\ge 2$ and $x$ is the rightmost element of $\XL(e)$, then
	$\XL(e')\le 1$ for some $e' \in S(e)$ incident with $x$ and $\XR(e')=\emptyset$ for both. 
\end{Claim}

Suppose that $e=v_iv_j$ with $i<j$ is in the upper half-plane and there are
two vertices $x_1,x_2$ in $\XL(e)$. We assume that $x_2$ is the rightmost
element of $\XL(e)$. From one-sidedness at $v_j$ we know that $x_1v_j$ and $x_2v_j$
are in the upper half-plane. Now $x_1v_i$ and hence also $x_2v_i$ are in the
lower half-plane.  All the vertices of $X(x_2v_i)$ and $X(x_2v_j)$ are in the
region bounded by $x_1v_j,v_jv_i,v_ix_1$, in particular $\XR(e')=\emptyset$ for both.
Suppose for contradiction that we have
$y_1,y_2 \in \XL(x_2v_i)$ and $z_1,z_2 \in \XL(x_2v_j)$.  By one-sidedness the
edges from $x_2$ to the four vertices $y_1,y_2,z_1,z_2$ are in the same
half-plane. If they are in the lower half-plane and $y_1$ is left of $y_2$
there is a crossing between $y_1x_2$ and $y_2v_i$.  If they are in the upper
half-plane and $z_1$ is left of $z_2$ there is a crossing between $z_1x_2$ and
$z_2v_j$.  The contradiction shows that $\XL(x_2v_i)\le 1$ or
$\XL(x_2v_j)\le 1$, since $x=x_2$ this completes the proof of the claim.  \medskip

We are ready to define the graph $G$ and then use the claims to prove that
$G$ is not POSH. The graph $G$ contains a \emph{base edge} $e$ and
seven vertices stacked on~$e$, i.e., $|X(e)|=7$. 
For each edge $e'\in S(e)$ there are five vertices stacked on $e'$.
Finally, for each edge $e''$ introduced like that three vertices are stacked on $e''$.
Note that there are $7\cdot 2 = 14$ edges $e'$, $14 \cdot 5 \cdot 2=140$ edges $e''$
and $140 \cdot 3 \cdot 2=840$ edges
introduced by stacking on an edge $e''$. In total the number of edges is $995 = 2n-3$, hence,
the graph has 499 vertices.

Now suppose that $G$ is POSH and let $v_1,\ldots,v_n$ be the order of vertices on
the spine of a certifying 2-page book embedding. Let $e=v_iv_j$ with $i<j$ be 
the base edge.   Assume  by symmetry that $e$ is in the
upper half-plane.  From Claim~\ref{claim:XR} we get $|\XR(e)| \leq 2$ and from Claim~\ref{claim:XM} we
get $|\XM(e)| \leq 3$, it follows that $|\XL(e)| \geq 2$. Let $x_1$ and $x_2$
be elements of $\XL(e)$ such that $x_2$ is the rightmost element of
$\XL(e)$. Let $e'=x_2v_i$ and $e''=x_2v_j$ then
$\XR(e') = \emptyset =\XR(e'')$ by Claim~\ref{claim:XL}. From Claim~\ref{claim:XM} applied to $e'$ and $e''$ we deduce that
$|\XM(e')| \leq 3$ and $|\XM(e'')| \leq 3$. Hence $|\XL(e')| \geq 2$ and
$|\XL(e'')| \geq 2$.  This is in contradiction with Claim 3. Thus
there is no spine ordering for $G$ which leads to a one-sided crossing-free
2-page book embedding.
\end{proof}

\section{Concluding remarks}\label{sec:conclusion}


We have examined the exploding double chain as a special point set 
(order type) and shown that the initial part $H_n$ of size $2n-2$
is $n$-universal for graphs on $n$ vertices that are POSH.
We believe that the class of POSH graphs is quite rich. 
On the sparse side, the result on bipartite graphs might be generalized, while for triangulations, 
the sheer number of Hamiltonian cycles in 5-connected graphs \cite{Alahmadi202027} makes it likely
one of them is one-sided.

\begin{conjecture}
	Every triangle-free planar graph is POSH.
\end{conjecture}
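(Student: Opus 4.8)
The plan is to mimic the structure of the bipartite proof (Theorem~\ref{thm:bipartite}) as closely as possible. As there, POSH is closed under taking subgraphs, so it suffices to treat \emph{maximal} triangle-free planar graphs, since every triangle-free planar graph is a spanning subgraph of one. First I would record what these maximal objects look like: a maximal triangle-free plane graph has no $3$-face, and it cannot contain a face of length $\ge 6$ either, because a sufficiently long chord of such a face splits it into two faces of length $\ge 4$ without creating a triangle. Hence every face has length $4$ or $5$, and the task reduces to producing, for every plane graph whose faces are quadrilaterals and pentagons, a left-to-right vertex order $v_1,\ldots,v_n$ together with a one-sided $2$-page book embedding of a supergraph — equivalently, an equatorial-line drawing in which the edges split into two alternating trees, exactly as in the bipartite case.

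The natural attempt is to extend separating decompositions to this mixed $4$/$5$-face setting. One tempting shortcut is to triangulate each pentagon by a chord and reduce to quadrangulations; this is permitted because the certifying supergraph $G'$ need \emph{not} be triangle-free, only $G$ itself must be. However, a single chord of a pentagon necessarily creates a triangle \emph{and} leaves a quadrilateral, so one cannot reach an all-quadrilateral graph this way without further splitting the triangle, and adding new vertices is forbidden since $G'$ must live on the same vertex set. I would therefore try to define a \emph{generalized separating decomposition} directly on the $4$/$5$-face structure: an orientation and red/blue coloring of the edges, with two distinguished outer vertices $s,t$, such that the red edges form an in-tree toward $s$, the blue edges an in-tree toward $t$, and the two trees are separated by an equatorial curve visiting every vertex and every inner face. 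Given such a structure, the argument of Theorem~\ref{thm:bipartite} should carry over: straightening the equatorial line onto the $x$-axis yields alternating trees above and below, and closing up along the spine produces a one-sided Hamiltonian cycle.

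The hard part will be constructing this decomposition. In the bipartite case its existence rests on the black/white $2$-coloring of the quadrangulation and the resulting $2$-orientation, whose existence is proved by a flow/matching argument in which the parity of the faces is essential. Pentagons destroy the $2$-coloring and the clean out-degree bookkeeping, so the flow argument does not survive unchanged. The most promising remedy I see is to allow a controlled \emph{defect} at each pentagon — for instance, treating one edge of every $5$-face as exceptional, permitted to be bi-colored or to violate the out-degree pattern — and then to re-prove existence either by an inductive face-contraction argument in the spirit of Felsner~et~al.~\cite{FHKO10} or by a flow model that routes one extra unit of demand through each pentagon. A second, independent avenue is a reduction analogous to the subcubic proof (Theorem~\ref{thm:cubic}): locate a small edge set whose contraction turns $G$ into a quadrangulation-like bipartite graph, apply Theorem~\ref{thm:bipartite}, and then re-split the contracted vertices while preserving one-sidedness. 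The obstacles there are that contraction in a triangle-free graph can create triangles, so the contracted edges must be chosen face-by-face with care, and that, unlike the subcubic case, the max-cut matching argument of Lemma~\ref{lem:cube-cont} no longer forces $G[X]$ and $G[Y]$ to be matchings once vertices of large degree are present. Reconciling either approach with the pentagon defect is, I expect, the crux of the conjecture.
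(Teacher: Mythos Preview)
The statement you are addressing is a \emph{conjecture} in the paper, not a theorem: the authors list it in Section~\ref{sec:conclusion} as an open problem and give no proof. Consequently there is no proof in the paper to compare your proposal against, and your submission is not a proof either --- as you yourself say, it is a research plan whose ``hard part'' and ``crux'' you explicitly leave unresolved. At the level of grading, this is a genuine gap: neither the generalized separating decomposition nor the contraction--resplit reduction is actually carried out, and you correctly identify that the existence arguments from \cite{deMendez,FHKO10} rely on bipartiteness in an essential way.

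One concrete technical slip in your outline is worth flagging. Your claim that a maximal triangle-free plane graph has only $4$- and $5$-faces does not follow from the chord argument you give. A chord $v_iv_j$ of a long face creates a triangle not only when the two resulting sub-faces have length~$3$, but whenever $v_i$ and $v_j$ share a common neighbor anywhere in the graph. So a $6$-face (or longer) can be ``blocked'': every candidate long chord may already close a triangle through a vertex outside the face, and your reduction to the $4/5$-face case is not justified. Any attempt along the lines you sketch would first have to either prove this structural claim by a different argument or work directly with arbitrary face lengths.
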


\begin{conjecture}
	Every 5-connected planar triangulation is POSH.
\end{conjecture}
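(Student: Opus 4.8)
The plan is to first reduce the conjecture to a purely combinatorial Hamiltonicity question and then to attack that question by an extremal flip argument, guided by the count of Hamiltonian cycles hinted at above. I begin with the reduction. A $5$-connected planar triangulation $T$ on $n$ vertices is maximal planar, so any (simple) planar graph on $V(T)$ containing $T$ as a spanning subgraph has at most $3n-6$ edges and hence equals $T$ itself. Thus, by the definition of POSH, $T$ is POSH if and only if $T$ \emph{itself} admits a plane embedding with a one-sided Hamiltonian cycle. Since $T$ is $3$-connected its spherical embedding is unique up to reflection, so the only freedom is the choice of outer face, of the special edge on it, and of the traversal direction. The task therefore becomes: exhibit a Hamiltonian cycle $C$ of $T$ together with such a choice so that, writing $v_1,\ldots,v_n$ for the induced order and $D$ for the disk bounded by $C$, every vertex $v_j$ has all its back-chords $v_iv_j$ with $i<j-1$ on a single side of $D$.

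Existence of \emph{some} Hamiltonian cycle is free: $5$-connectivity implies $4$-connectivity, so $T$ is Hamiltonian by Tutte's theorem, and in fact carries exponentially many Hamiltonian cycles \cite{Alahmadi202027}. The substance is the one-sided requirement. Following the heuristic in the concluding remarks, I would attach to each Hamiltonian cycle $C$ a potential $\Phi(C)$ equal to the number of \emph{two-sided} vertices, i.e.\ vertices $v_j$ carrying back-chords both inside and outside $D$, minimized over the finitely many choices of special edge and direction. Then $\Phi(C)=0$ exactly when $C$ is one-sided, and the goal is to show $\min_C \Phi(C)=0$. My strategy is extremal: take a Hamiltonian cycle $C$ minimizing $\Phi$, assume some vertex $v_j$ is two-sided, and construct a local modification of $C$ — exchanging a chord for cycle edges, in the spirit of the edge-rerouting and matching-exchange arguments used in Lemma~\ref{lem:cube-good} and in the separating-decomposition machinery behind Theorem~\ref{thm:bipartite} — that yields a new Hamiltonian cycle $C'$ with $\Phi(C')<\Phi(C)$, contradicting minimality.

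As an alternative, more structural route that mirrors the bipartite proof, I would search for the triangulation analogue of the separating decomposition plus equatorial line that produced the one-sided $2$-page book embedding in Theorem~\ref{thm:bipartite}. Transversal structures (regular edge labelings), which exist precisely for $4$-connected triangulations, partition the inner edges into two oriented classes much as a separating decomposition splits a quadrangulation into a red and a blue tree. The hope is to read off from such a labeling a linear order of the vertices and an assignment of all chords to two pages so that, at every vertex, the back-chords all land on one page — which is exactly one-sidedness — with the additional requirement, forced by maximality of $T$, that consecutive vertices in the spine order are already adjacent, so that the spine order is itself a Hamiltonian cycle.

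The main obstacle is the genuinely global nature of one-sidedness, together with the fact that any correct proof must use $5$-connectivity in an essential way: Alam and Kobourov \cite{AlamK12} exhibit a $4$-connected triangulation with \emph{no} one-sided Hamiltonian cycle, so $4$-connectivity alone cannot suffice. In the extremal approach this forces the improving flip to exist only under $5$-connectivity, and the delicate point is that repairing the two-sidedness at $v_j$ must not create new two-sided vertices elsewhere; bounding this interaction is where I expect the difficulty to concentrate. In the structural approach the obstacle is that no transversal-structure analogue of the equatorial line is known to simultaneously deliver a Hamiltonian spine order and the one-sided back-edge property, so establishing that such a curve exists for $5$-connected triangulations is the crux.
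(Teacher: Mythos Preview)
The statement you are attempting is a \emph{conjecture} in the paper, not a theorem: the paper offers no proof, only the heuristic remark that the abundance of Hamiltonian cycles in $5$-connected triangulations makes it plausible that one of them is one-sided. There is therefore no ``paper's own proof'' to compare against.

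Your proposal is not a proof either, and you are candid about this. The initial reduction is correct and cleanly stated: since a triangulation has no proper planar supergraph on the same vertex set, POSH for $T$ is equivalent to $T$ itself admitting a one-sided Hamiltonian cycle in its unique embedding. From there you outline two strategies --- an extremal flip argument on a potential $\Phi$, and a transversal-structure analogue of the equatorial line --- but in each case you explicitly identify the crux and leave it unresolved. In the flip approach you do not exhibit the improving move, nor do you show why $5$-connectivity (as opposed to $4$-connectivity, which the Alam--Kobourov example rules out) guarantees its existence; controlling the creation of new two-sided vertices under a local exchange is exactly the hard global part, and nothing in the proposal addresses it. In the structural approach you correctly note that no equatorial-line analogue for transversal structures is known to yield a Hamiltonian spine with one-sided back-edges, and you offer no construction.

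In short: your reduction is sound and your two avenues are reasonable research directions that align with the paper's own intuition, but what you have written is a plan with named obstacles, not a proof. The conjecture remains open.
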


We have shown that $2$-trees and their superclasses series-parallel and
planar Laman graphs are not contained in the class $\cC'$ of POSH graphs.
The question whether these classes admit universal point sets of linear size
remains intriguing. 

\bibliography{arxiv}

\end{document}